\newcommand*\samethanks[1][\value{footnote}]{\footnotemark[#1]}
\numberwithin{equation}{section}
\declaretheoremstyle[bodyfont=\it,qed=\qedsymbol]{noproofstyle}
\declaretheorem[numberlike=equation]{observation}
\declaretheorem[numberlike=equation,style=noproofstyle,name=Observation]{observationwp}
\declaretheorem[name=Observation,numbered=no]{observation*}
\crefname{observation}{Observation}{Observations}
\crefname{observationwp}{Observation}{Observations}
\declaretheorem[numberlike=equation]{theorem}
\declaretheorem[name=Theorem,numbered=no]{theorem*}
\declaretheorem[numberlike=equation]{lemma}
\declaretheorem[name=Lemma,numbered=no]{lemma*}
\declaretheorem[numberlike=equation]{corollary}
\declaretheorem[name=Corollary,numbered=no]{corollary*}
\declaretheorem[numberlike=equation,style=noproofstyle,name=Corollary]{corollarywp}
\declaretheorem[name=Proposition,numbered=no]{proposition*}
\declaretheorem[name=Claim,numbered=no]{claim*}
\declaretheorem[name=Conjecture,numbered=no]{conjecture*}
\declaretheorem[name=Question,numbered=no]{question*}
\declaretheoremstyle[bodyfont=\it,qed=$\lozenge$]{defstyle} 
\declaretheorem[numberlike=equation,style=defstyle]{definition}
\declaretheorem[unnumbered,name=Definition,style=defstyle]{definition*}
\declaretheorem[unnumbered,name=Example,style=defstyle]{example*}
\declaretheorem[unnumbered,name=Notation=defstyle]{notation*}
\declaretheorem[unnumbered,name=Construction,style=defstyle]{construction*}
\declaretheorem[unnumbered,name=Remark,style=defstyle]{remark*}
\newtheorem{dtheorem}[theorem]{Descent Theorem}
\declaretheorem[numberlike=equation,style=noproofstyle,name=Courant-Fischer Theorem]{cftheorem}
\newcommand{\shortECCC}[2]{\texttt{\href{http://eccc.hpi-web.de/report/\ifnumcomp{#1}{>}{93}{19}{20}#1/#2/}{eccc:TR#1-#2}}}
\newcommand{\parseECCC}[1]{
\StrSubstitute{#1}{TR}{}[\tmpstring]%
\IfSubStr{\tmpstring}{/}{ 
\StrBefore{\tmpstring}{/}[\ecccyear]%
\StrBehind{\tmpstring}{/}[\ecccreport]%
}{
\StrBefore{\tmpstring}{-}[\ecccyear]%
\StrBehind{\tmpstring}{-}[\ecccreport]%
}%
\shortECCC{\ecccyear}{\ecccreport}}
\newif\ifnote
\newcommand{\RPnote}[1]{\todo[color=red!80!green!25!blue!33, size=\footnotesize]{RP: #1}}
\newcommand{\phnote}[1]{\todo[color=red!100!green!33,
  size=\footnotesize]{ph: #1}}
\newcommand{\prahladhuvacha}[1]{\todo[color=red!100!green!33,inline,size=\small]{Sri
    Prahladh Uvacha: #1}}
\newcommand{\RPnote}[1]{}
\newcommand{\phnote}[1]{}
\newcommand{\prahladhuvacha}[1]{}
\newcommand{\X}{\mathcal{X}}
\title{A note on the elementary construction of\\
  High-Dimensional Expanders\\
  of Kaufman and Oppenheim\thanks{ This work was done when the
    authors were visiting the Simons Institute for Theory of
    Computing, Berkeley to participate in the summer cluster on
    Error-Correcting Codes and High Dimensional Expanders 2019..}}
\author{Prahladh Harsha\thanks{Tata Institute of Fundamental Research,
    INDIA. email: {\tt
      \{prahladh,ramprasad\}@tifr.res.in}. Research of
    the authors supported by the Department of Atomic Energy,
    Government of India, under project no. 12-R\&D-TFR-5.01-0500. Research
    of the first and second authors supported in part by the Swarnajayanti
    and Ramanujan Fellowships respectively.}
  \and
  Ramprasad Saptharishi\samethanks }
\begin{document}
\maketitle
{\let\thefootnote\relax
\footnotetext{\textcolor{white}{Base version:~(\gitAuthorIsoDate)\;,\;\gitAbbrevHash\;\; \gitVtag}}
}

\begin{abstract}
  In this note, we give a self-contained and elementary proof of the elementary construction of spectral high-dimensional expanders using elementary matrices due to Kaufman and Oppenheim [{\em Proc.\ $50$th ACM Symp.\ on Theory of Computing (STOC)}, 2018].
As a bonus, this also yields a simple construction and analysis of standard expanders. 
\end{abstract}

\section{Introduction}

In the last few years, there has been a surge of activity related to high-dimensional expanders (HDXs).
Loosely speaking, high-dimensional expanders are a high-dimensional generalization of classical graph expanders.
Depending on which definition of graph expansion is generalized, there are several different (and unfortunately, many a time mutually inequivalent) definitions of HDXs.
For the purpose of this note, we will restrict ourselves to the spectral definition of HDXs (see \cref{def:hdx}).
Lubotzky, Samuels and Vishne~\cite{LubotzkySV2005-exphdx,LubotzkySV2005-hdx} constructed high-dimensional analgoues of the Ramanujan expanders of Lubotzky, Philips and Sarnak~\cite{LubotzkyPS1988}, which they termed Ramanujan complexes.
These Ramanujan complexes have several desirable properties and gave rise to the first construction of constant degree spectral HDXs.
The Ramanujan graphs have the nice property that they are simple to describe, however their proof of expansion is extremely involved.
The Ramanujan complexes, on the other hand, are both non-trivial to describe as well as to prove their high-dimensional expansion property.
Subsequently Kaufman and Oppenheim~\cite{KaufmanO2018} gave an extremely elegant and elementary construction of spectral HDXs using elementary matrices.
Despite their construction being elementary and simple, the proof of expansion, though straightforward, requires some knowledge of some representation theory of the specific groups involved in the construction.
The purpose of this exposition is to give an alternate elementary proof of the expansion of the Kaufman-Oppenheim HDX construction.

The underlying graph of a HDX (even a one-sided-spectral HDX) is a two-sided-spectral expander.
Thus, this construction has the added advantage that it yields an elementary construction (accompanied with a simple proof) of a standard two-sided-spectral expander (though not an optimal one).

\section{Preliminaries}

We begin by recalling what a simplicial complex is. 
\begin{definition}[Simplicial complex]
  A \emph{simplicial complex} $X$ over a finite set $U$ is a collection of subsets of $U$ with the property that if $S \in X$ then any $T \subseteq S$ is also in $X$.
  \begin{itemize}
  \item For all $i\geq -1$,
    define $X(i) := \setdef{S \in X}{|S| = i+1}$. Thus, if $X$ is
    non-empty, then $X(-1) = \set{\emptyset}$. 
  \item The elements of $X$ are called \emph{simplices} or
    \emph{faces}. The elements of $X(0)$, $X(1)$ and $X(2)$ are
    usually referred to as \emph{vertices}, \emph{edges} and
    \emph{triangles} respectively.
  \item The graph defined by $X(0)$ and $X(1)$ is called the
    \emph{$1$-skeleton} of the complex. More generally, for any $1
    \leq k \leq d$, the $k$-skeleton of the complex $X$ is the
    sub-complex $X(-1) \cup X(0) \cup X(1) \cup \dots \cup X(k)$. 
  \item The \emph{dimension} of the simplicial complex $X$ is defined as the largest $d$ such that $X(d)$ (which consists of faces of size $d+1$) is non-empty.
  \item The simplicial complex is said to be \emph{pure} if every face
    is contained in some face in $X(d)$, where $d = \dim(X)$.
    
  \item For a face $S \in X$, the \emph{link} of $S$, denoted by
    $X_S$, is the simplicial complex defined as 
    \[
      X_S := \setdef{T \setminus S}{T\in X\;,\; S \subseteq
        T}. \qedhere\
    \]
    
  \end{itemize}
\end{definition}

Thus, a graph $G=(V,E)$ is just a simplicial complex $G$ of dimension one
with $G(0) = V$ and $G(1)=E$. We will deal with {\em weighted} pure
simplicial complexes where the weight function satisfies a certain
{\em balance} condition.

\begin{definition}[weighted pure simplicial complexes]
Given a $d$-dimensional pure simplicial complex $X$ and an associated
weight function $w\colon X \to \R_{\geq 0}$, we say the weight
function is {\em balanced} if the following two conditions are
satisfied.
\begin{align}\label{eq:weight}
  \sum_{\sigma \in X(d)} w(\sigma) & = 1 \; ; & 
  w(\sigma) &  = \frac{1}{i+2}\sum_{\tau \in X(i+1), \tau \supset \sigma}
              w(\tau), \ \text{ for all } i<d \text{ and } \sigma \in X(i).
\end{align}
A {\em weighted simplicial complex} $(X,w)$ is a pure simplical
      complex accompanied with a balanced weight function $w$. If no
      weight function is specified, then we work with the balanced weight
      function $w$ induced by the uniform distribution on the set
      $X(d)$ of maximal faces.

      For a face $S \in X$, the balanced weight function $w_S$ associated
      with the link $X_S$ is the restricted weight function, suitably
      normalized, more precisely $w_S :=
    \sfrac{w|_{X_S}}{w(S)}$. 
\end{definition}

Condition~\eqref{eq:weight} states that the weight function can be interpreted as a family of joint distributions $(w|_{X(-1)}, \ldots, w|_{X(d)})$ where $w|_{X(i)}$ is a probability distribution on $X(i)$.
The distribution $w|_{X(d)}$ is specified by the first condition in \eqref{eq:weight} while the second condition implies that the weight distribution $w|_{X(i)}$ is the distribution on $X(i)$ obtained by picking a random $\tau \in X(d)$ according to $w|_{X(d)}$ and then removing $(d-i)$ elements uniformly at random.

We now recall the classical definition
of what it means for a graph to be a spectral expander.

\begin{definition}[spectral expander]
Given an undirected  weighted graph $G=(V,E, w)$ on $n$ vertices, let $A_G$ be its
normalized adjacency matrix given as follows:
\[ A_G(u,v) := \begin{cases}
        \frac{w(u,v)}{w(u)} & \text{if } \{u,v\} \in E,\\
      0 & \text{otherwise.}
    \end{cases}
    \]
Let $1 = \lambda_1 \geq \lambda_2 \geq
\cdots \geq \lambda_n \geq -1$ be the $n$ eigenvalues of $A_G$ with
multiplicities in non-increasing order\footnote{By the balance
  condition, $w$ satisfies $w(v) = \sum_{\{u,v\} \in E} w(u,v)$. The
  matrix $A_G$ is self-adjoint with respect to the
  inner product $\inangle{f,g}_w := \E_{v \sim w}[f(v) g(v)]$ since
  $\inangle{f,Ag}_w = \inangle{Af,g}_w = \E_{\{u,v\} \sim w}[f(u)
  g(v)]$. Hence, $A_G$ has $n$ real eigenvalues which can be
  obtained using the \autoref{thm:cf}.}. We denote the second largest
eigenvalue of $G$ as $\lambda(G)$. 

$G$ is said to be a
\emph{$\lambda$-spectral expander} if $\max\{\lambda_2,|\lambda_n|\}
\leq \lambda$. This is sometimes also referred to as a $\lambda$-two-sided-spectral expander. 

$G$ is said to be a
\emph{$\lambda$-one-sided-spectral expander} if $\lambda_2 \leq \lambda$.
\end{definition}

This spectral definition of expanders is generalized to higher dimensional
simplicial complexes as follows.

\begin{definition}[$\lambda$-spectral HDX]\label{def:hdx}
  A weighted simplicial complex $(X,w)$ of dimension $d \geq 1$ is said to be a
  \emph{$\lambda$-spectral HDX} (or a
  \emph{$\lambda$-two-sided-spectral HDX})\footnote{These are sometimes also referred
    to as $\lambda$-link HDXs or $\lambda$-local-expanders to
    distinguish from an alternative global definition of
    high-dimensional expansion.} if for every $-1\leq i \leq d-2$ and
  $s \in X(i)$, the weighted $1$-skeleton of the link $(X_s, w_s)$ is a
  $\lambda$-spectral expander.

  A weighted simplicial complex $(X,w)$ of dimension $d \geq 1$ is said to be a
  \emph{$\lambda$-one-sided-spectral HDX} if for every $-1\leq i \leq d-2$ and
  $s \in X(i)$, the weighted $1$-skeleton of the link $(X_s,w_s)$ is a
  $\lambda$-one-sided-spectral expander.
\end{definition}

Using Garland's technique~\cite{Garland1973}, Oppenheim~\cite{Oppenheim2018} showed that if the 1-skeletons of all the links are
connected, then a spectral gap at dimension $(d-2)$ descends to all
lower levels.

\begin{dtheorem}[\cite{Oppenheim2018}]\label{thm:trickle-down} Suppose
  $(X,w)$ is a $d$-dimensional weighted simplicial complex with the following properties.
  \begin{itemize}
  \item For all $s \in X(d-2)$, the link $(X_s,w_s)$ is a
    $\lambda$-one-sided-spectral expander for some $\lambda <
    \frac{1}{d-1}$. 
  \item The $1$-skeleton of every link is connected.
  \end{itemize}
  Then, $(X,w)$ is a $\pfrac{\lambda}{1 - (d-1)\lambda}$-one-sided-spectral HDX.   
\end{dtheorem}
Thus to prove that a given simplicial complex is a spectral HDX, it suffices to show that the 1-skeleton of all links are connected and a spectral gap at the top level.
For the sake of completeness, we give a proof of the \autoref{thm:trickle-down} in \cref{sec:descent} which includes a descent theorem for the least eigenvalue as well.

\section{Coset complexes}

The HDX construction of Kaufmann and Oppenheim is a particular instantiation of a certain type of simplicial complex called a \emph{coset complex} based on a group and its subgroups. In this section, we give an exposition of these objects. 
For a basic primer on group theory,
see \cref{sec:group}.

\begin{definition}[coset complex] Let $G$ be a group and let
  $K_1,\ldots, K_d$ be $d$ subgroups of $G$. The
coset complex $\X(G,\set{K_1,\ldots, K_d})$ is a $(d-1)$-dimensional
simplicial complex defined as follows:
\begin{itemize}
\item The \emph{vertices}, $\X(0)$, consist of cosets of $K_1,\ldots, K_d$ and we shall say cosets of $K_i$ are of \emph{type} $i$.
\item The \emph{maximal faces}, $\X({d-1})$, consist of $d$-sets of
  cosets of different types with a non-empty intersection. That is,
  \[
    \set{g_1 K_1,\ldots, g_d K_d}\in \X(d-1) \Longleftrightarrow g_1 K_1 \cap \cdots \cap g_d K_d \neq \emptyset.
  \]
  An equivalent way of stating this is that $\set{g_1 K_1,\ldots, g_d K_d}\in \X(d-1)$ if and only if there is some $g \in G$ such that $g_i K_i = g K_i$ for all $i$, since
  \[
    g_i K_i = g K_i \;\Longleftrightarrow\; K_i = g_i^{-1} g K_i  \;\Longleftrightarrow\; g_{i}^{-1} g \in K_i \;\Longleftrightarrow\; g \in g_{i} K_i.
  \]
\item The lower dimensional faces are obtained by \emph{down-closing}
  the maximal faces. Hence, for $0\leq r \leq d$, $\set{g_{i_1} K_{i_1}, \ldots, g_{i_r} K_{i_r}} \in \X(r-1)$ if and only if $i_j \neq i_k$ for all $j \neq k$ and
  \[
    g_{i_1} K_{i_1} \cap \cdots \cap g_{i_r} K_{i_r} \neq \emptyset.
  \]
  We shall call the set $\set{i_1,\ldots, i_r}$ the \emph{type} of this face.
\item The dimension of this complex is $d-1$.
\item The weight function we will use is the one induced by the
  uniform distribution on the set $\X(d-1)$ of maximal faces. 
  \qedhere
\end{itemize}
\end{definition}

A simplicial complex constructed this way is \emph{partite} in the sense that each maximal face consists of vertices of distinct types. 

It follows from the definition, that $\X(i)$ is precisely the set of
cosets of the form $g K_S$ where $K_S = \intersection_{j\in S} K_j$
for sets $S \subseteq [d]$ of size exactly $i+1$. In particular, $\X(d-1)$, the
set of maximal faces, is in 1-1 correspondence with the group
$G$ if $\intersection_{j\in [d]} K_j = \{\mathrm{id}\}$ where
``$\mathrm{id}$'' is the identity element of the group $G$.

\subsubsection*{Connectivity:}

\begin{observation}\label{obs:nonempty-coset-intersection}
  $g_1 K_1 \cap g_2 K_2 \neq \emptyset$ if and only if $g_1^{-1} g_2 \in K_1 K_2$. 
\end{observation}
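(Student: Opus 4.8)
The plan is to prove both directions of the biconditional by directly unwinding the definition of what it means for two cosets to intersect. This is really a statement about the double coset $K_1 K_2 = \setdef{k_1 k_2}{k_1 \in K_1,\, k_2 \in K_2}$, and the whole proof hinges on translating ``the cosets share a common element'' into ``the group element $g_1^{-1}g_2$ lies in this product set.''

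First I would prove the forward direction. Suppose $g_1 K_1 \cap g_2 K_2 \neq \emptyset$, so there is some common element $x$ with $x \in g_1 K_1$ and $x \in g_2 K_2$. By definition of a left coset this means $x = g_1 k_1$ for some $k_1 \in K_1$ and simultaneously $x = g_2 k_2$ for some $k_2 \in K_2$. Setting these equal gives $g_1 k_1 = g_2 k_2$, and I would then solve for $g_1^{-1} g_2$ by left-multiplying by $g_1^{-1}$ and right-multiplying by $k_2^{-1}$, yielding $g_1^{-1} g_2 = k_1 k_2^{-1}$. Since $K_2$ is a subgroup, $k_2^{-1} \in K_2$, so $k_1 k_2^{-1} \in K_1 K_2$, which gives $g_1^{-1} g_2 \in K_1 K_2$ as desired.

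For the converse, I would assume $g_1^{-1} g_2 \in K_1 K_2$ and run the same computation in reverse. Write $g_1^{-1} g_2 = k_1 k_2$ with $k_1 \in K_1$ and $k_2 \in K_2$; then rearranging gives $g_1 k_1 = g_2 k_2^{-1}$. Using again that $K_2$ is closed under inverses, $k_2^{-1} \in K_2$, so the left-hand side lies in $g_1 K_1$ and the right-hand side lies in $g_2 K_2$, exhibiting a common element and hence $g_1 K_1 \cap g_2 K_2 \neq \emptyset$.

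I do not expect any genuine obstacle here, since this is a routine manipulation of cosets; the only point requiring the smallest amount of care is remembering to invoke the subgroup closure of $K_2$ under inverses when moving an element across the equation (and being consistent about whether one writes $k_2$ or $k_2^{-1}$ in the two directions). In fact, one can streamline both directions simultaneously by observing that $g_1 K_1 \cap g_2 K_2 \neq \emptyset$ is equivalent to the existence of $k_1 \in K_1,\, k_2 \in K_2$ with $g_1 k_1 = g_2 k_2$, which rearranges to $g_1^{-1} g_2 = k_1 k_2^{-1}$, and then noting that $k_2$ ranges over $K_2$ exactly when $k_2^{-1}$ does; this makes the membership $g_1^{-1} g_2 \in K_1 K_2$ an ``if and only if'' by construction.
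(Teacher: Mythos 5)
Your proof is correct and follows essentially the same route as the paper's: in the forward direction you extract a common element $x = g_1 k_1 = g_2 k_2$ and compute $g_1^{-1}g_2 = k_1 k_2^{-1}$, and in the converse you rearrange $g_1^{-1}g_2 = k_1 k_2$ to exhibit $g_1 k_1 = g_2 k_2^{-1}$ as a common element, exactly as in the paper. No issues.
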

\begin{proof}
  ($\Rightarrow$) Say $x = g_1 k_1 = g_2 k_2$ for $k_1 \in K_1$ and $k_2 \in K_2$. Then $g_1^{-1} g_2 = g_1^{-1} x \cdot x^{-1} g_2 = k_1 k_2^{-1} \in K_1 K_2$.

  ($\Leftarrow$) If $g_1^{-1} g_2 = k_1 k_2$ for $k_1 \in K_1$ and $k_2 \in K_2$, then $g_1 k_1 = g_2 k_2^{-1} \in g_1 K_1 \cap g_2 K_2$. 
\end{proof}

\begin{lemma}[Criterion for connected $1$-skeletons]\label{lem:conn-criterion}
  The $1$-skeleton (underlying graph) of $\X(G,\set{K_1,\ldots, K_d})$ is connected if and only if $G = \inangle{K_1,\ldots, K_d}$. 
\end{lemma}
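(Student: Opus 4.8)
The plan is to argue directly on the combinatorial structure of the $1$-skeleton, using two elementary observations. First, for any fixed $g \in G$ and any two distinct types $i \neq j$, the vertices $gK_i$ and $gK_j$ are adjacent, since they share the element $g$. Second, left multiplication by an element of $G$ is an automorphism of $\X(G,\set{K_1,\ldots,K_d})$, so the picture around every vertex looks the same; this is what lets a factorization of a group element into generators be turned into a path.

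The heart of the argument is a single \emph{generator step}: if $s \in K_j$, then $gK_i$ and $gsK_i$ lie in the same connected component. Indeed, if $j = i$ the two cosets coincide, while if $j \neq i$ then $gK_j = gsK_j$ (as $s \in K_j$), so $gK_i$ is joined to $gK_j = gsK_j$ (they share $g$), which is in turn joined to $gsK_i$ (they share $gs$), a path of length two. For the ($\Leftarrow$) direction, assume $G = \inangle{K_1,\ldots,K_d}$, fix a type $i$ and an arbitrary $g \in G$, and write $g = s_1 \cdots s_m$ with each $s_\ell \in K_{j_\ell}$ (possible exactly because $G$ is generated by the $K_i$, each of which is closed under inverses). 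Setting $p_\ell = s_1 \cdots s_\ell$, we have $p_\ell = p_{\ell-1}s_\ell$, so the generator step connects $p_{\ell-1}K_i$ to $p_\ell K_i$ for every $\ell$; telescoping connects $eK_i = p_0 K_i$ to $gK_i = p_m K_i$. Since the base cosets $eK_1,\ldots,eK_d$ are pairwise adjacent (all containing $e$), every vertex is connected to $eK_1$, so the $1$-skeleton is connected.

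For the converse, set $H := \inangle{K_1,\ldots,K_d}$ and identify the connected component of $eK_1$ precisely. Running the previous paragraph inside $H$ shows every coset $gK_i$ with $g \in H$ is connected to $eK_1$. For the reverse containment I would invoke \cref{obs:nonempty-coset-intersection}: the set $S := \setdef{gK_i}{g \in H,\ 1 \le i \le d}$ is closed under adjacency, because if $gK_i \in S$ and $g'K_j$ is a neighbour then $g^{-1}g' \in K_iK_j \subseteq H$, whence $g' \in H$ and $g'K_j \in S$. Thus the component of $eK_1$ is exactly $S$, and connectivity forces $S$ to be all of $\X(0)$; in particular every $gK_1$ lies in $S$, i.e.\ $g \in H$ for all $g \in G$, so $G = H$. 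The one step that needs care is the generator step --- correctly identifying the intermediate vertex $gK_j = gsK_j$ and its type --- together with the application of \cref{obs:nonempty-coset-intersection} to seal off the component in the converse; everything else is bookkeeping. (The degenerate case $d = 1$ has no edges, so connectivity just says there is a single coset $K_1 = G = \inangle{K_1}$.)
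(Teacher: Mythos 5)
Your proof is correct and follows essentially the same route as the paper's: both directions hinge on \cref{obs:nonempty-coset-intersection}, translating a factorization of $g$ into generators into a walk in the $1$-skeleton and, conversely, reading a factorization off a path. The only differences are packaging --- you take a two-step ``generator move'' that returns to a fixed type and phrase the converse as closure of the component of $eK_1$ under adjacency, whereas the paper walks directly through cosets of varying types and telescopes the product $\prod_j(g_j^{-1}g_{j+1})$ along the path --- but these are the same argument.
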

\begin{proof}
  ($\Leftarrow$) Since there is always an edge between $gK_i$ and $gK_j$ for $i\neq j$, it suffices to show that $K_1$ is connected to $gK_1$ for an arbitrary $g \in G$.
Suppose, for an arbitrary element $g \in G$, we have $g = g_1 \ldots
g_r$ where $g_j \in K_{i_j}$ and $i_j \neq i_{j+1}$ for each $j$. We might, without loss of generality, assume that
(a) $g_1 \in K_1$ (otherwise set $g = 1\cdot g_1 \cdots g_r$)  and (b) if $r \geq 2$,
then $i_r \neq 1$ (since otherwise we might then have worked with $g' =
g_1g_2 \ldots g_{r-1}$ as $gK_1= g'g_r K_1 = g'K_1$).

Then, we get the following path connecting $K_1 $ and $g K_{i_r}$
  \[
    K_1 =  g_1 K_{i_1} \rightarrow (g_1 g_2) K_{i_2} \rightarrow (g_1 g_2 g_3) K_{i_3} \rightarrow \ldots \rightarrow (g_1\cdots g_r) K_{i_r} = g K_{i_r}.
  \]
  Note that, due to \cref{obs:nonempty-coset-intersection}, each
  successive pair of cosets are connected by an edge in the simplicial
  complex. Now, since $gK_{i_r}$ is adjacent to $gK_1$ (as $i_r \neq
  1$), we have that $K_1$ is connected to $gK_1$. \\

  \noindent
  ($\Rightarrow$) For an arbitrary $g \in G$, since the $1$-skeleton is connected we have a path
  \[
    K_1 = g_0 K_{i_0} \rightarrow  g_1 K_{i_1} \rightarrow \cdots \rightarrow g_r K_{i_r} = g K_1.
  \]
  By \cref{obs:nonempty-coset-intersection}, for every $j = 0,\ldots, r-1$, we have $g_j^{-1} g_{j+1} \in K_{i_j} K_{i_{j+1}} \in \inangle{K_1,\ldots, K_d}$.
  Therefore,
  \[
    g = (g_0^{-1} g_1) \cdot (g_1^{-1} g_2) \cdots (g_{r-1}^{-1} g_r) \in \inangle{K_1,\ldots, K_d}. \qedhere
  \]  
\end{proof}

\subsubsection*{Structure of links of the coset complex:}

For any set $S \subseteq [d]$, define the group $K_S :=
\Intersection_{i\in S} K_i$; let $K_\emptyset := \inangle{K_1,\ldots,
  K_d}$. The following lemma shows that the links of a coset complex
are themselves coset complexes. 

\begin{lemma}\label{lem:link-structure}
  For any $v\in \X(k)$ of type $S \subseteq [d]$, the link $X_v$ is isomorphic to the simplicial complex defined by $\X(K_S, \setdef{K_{S} \cap K_i}{i\notin S})$. 
\end{lemma}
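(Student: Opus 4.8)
The plan is to exhibit an explicit simplicial isomorphism, built from a fixed representative of $v$. Since $v$ has type $S$, by definition of the coset complex there is some $g \in G$ with $v = \setdef{g K_i}{i \in S}$; moreover the set of common intersection points of these cosets is exactly the coset $gK_S$ (an element $x$ lies in $\Intersection_{i\in S} gK_i$ iff $g^{-1}x \in \Intersection_{i\in S}K_i = K_S$). I fix one such $g$ throughout, so that the whole construction is pinned down by a single group element.

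Next I would identify the vertices of the link $X_v$. A coset $h K_j$ with $j \notin S$ is a vertex of $X_v$ precisely when $v \cup \set{hK_j}$ is a face of $\X$, i.e.\ when there is $g' \in G$ with $g' K_i = g K_i$ for all $i \in S$ and $g' K_j = h K_j$. The first condition forces $g' \in g K_S$, say $g' = gk$ with $k \in K_S$, and then $h K_j = gk K_j$. Thus the type-$j$ vertices of $X_v$ are exactly the cosets $gk K_j$ with $k \in K_S$. These are in natural correspondence with the type-$j$ vertices $k(K_S \cap K_j)$ of $\X(K_S, \setdef{K_S \cap K_i}{i\notin S})$, so I would define
\[
  \phi \colon k(K_S \cap K_j) \longmapsto gk K_j, \qquad k \in K_S,\ j \notin S.
\]
A short computation shows $\phi$ is well defined and bijective on vertices: $k(K_S \cap K_j) = k'(K_S\cap K_j)$ iff $k^{-1}k' \in K_S \cap K_j$, which, using $k,k' \in K_S$, is equivalent to $gkK_j = gk'K_j$.

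Finally I would check that $\phi$ and its inverse preserve faces. A collection $\setdef{k_i(K_S\cap K_i)}{i \in T}$ with $T \subseteq [d]\setminus S$ is a face of the target complex iff there is a single $k \in K_S$ with $k_i(K_S \cap K_i) = k(K_S\cap K_i)$ for all $i \in T$; applying $\phi$ yields $\setdef{gkK_i}{i\in T}$, and together with $v = \setdef{gK_i}{i\in S}$ all of these cosets contain the common point $gk$ (note $gkK_i = gK_i$ for $i \in S$ since $k \in K_S \subseteq K_i$). Hence $v \cup \setdef{gkK_i}{i \in T}$ is a face of $\X$, so $\setdef{gkK_i}{i\in T}$ is a face of $X_v$. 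Conversely, a face of $X_v$ of type $T$ comes with a single witnessing element $g'$ satisfying $g'K_i = gK_i$ for $i \in S$ and $g'K_i = h_iK_i$ for $i \in T$; writing $g' = gk$ with $k \in K_S$ shows every such face has the form $\setdef{gkK_i}{i\in T}$ for one common $k$, which is $\phi$ of a genuine face of the target.

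The only genuinely delicate point is this last ``common $k$'' step: I must ensure that a face of the link is witnessed by a single group element $g'$, rather than merely by pairwise-compatible cosets. This is exactly where the coset-complex definition via a common intersection $\Intersection_i g_iK_i \neq \emptyset$ (as opposed to pairwise intersections, cf.\ \cref{obs:nonempty-coset-intersection}) is essential, and I would highlight it as the crux of the face-correspondence.
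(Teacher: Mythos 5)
Your proof is correct and rests on the same mechanism as the paper's: both identify the type-$j$ vertices of the link with cosets of $K_S\cap K_j$ inside $K_S$, and both match faces by passing to a common witnessing group element in the intersection of the cosets. The only difference is organizational --- you build the isomorphism directly for a face of arbitrary type $S$, whereas the paper first reduces to a single vertex $v=K_1$ via translation-invariance and handles larger $S$ by taking links of links inductively; your version is marginally more self-contained since it avoids that (unspelled-out) induction, and you are right to flag the ``single common witness'' point as the crux.
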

\begin{proof}
  It suffices to prove this lemma for $v \in \X(0)$ as links of higher levels can be obtained by inductive applications of this case.

Observe that if $g$ is any element of  $G$, then
  $(g_{i_1} K_{i_1}, \ldots, g_{i_r} K_{i_r}) \in \X(r-1)$ if and only
  if $(g g_{i_1} K_{i_1}, \ldots, g g_{i_r} K_{i_r}) \in \X(r-1)$.
  Therefore, the link of a coset $g K_i$ is isomorphic to the link of
  the coset $K_i$. Thus, it suffices to prove the lemma for links of
  the type $X_{K_i}$ for some $i \in [d]$. 

  Let $v$ be the coset $K_1$, without loss of generality.
  The \emph{vertices} of the link, $X_v(0)$, are cosets of $K_2,\ldots, K_d$ that have a non-empty intersection with $K_1$.
  Note that any non-empty intersection $g_j K_j \cap K_1$ of a coset with $K_1$ is itself a coset $\tilde{g}_j (K_j \cap K_1)$ of the intersection subgroup $K_j \cap K_1$ in $K_1$. Indeed, suppose that $g_j h_j \in K_1$ for some $h_j \in K_j$. Then, $g_j h_j K_j = g_j K_j$ and $g_j h_j K_1 = K_1$ and hence
  \[
    g_j K_j \cap K_1 = g_j h_j K_j \cap g_j h_j K_1 = g_j h_j (K_j \cap K_1).
  \]
  Therefore, the vertices of the link $X_v(0)$ are in bijective correspondence with cosets of $\setdef{K_j \cap K_1}{j \in \set{2,\ldots, d}}$.

  The maximal faces in $X$ that contain the coset $K_1$ are precisely
  $d$-sets of cosets $\set{K_1, g_2 K_2, \ldots, g_d K_d}$ with a
  non-empty intersection and hence
  \[
    \emptyset \neq K_1 \cap g_2 K_2 \cap  \cdots \cap g_d K_d = (g_2 K_2 \cap K_1) \cap \cdots \cap (g_d K_d \cap K_1) = \tilde{g}_2 (K_2 \cap K_1) \cap \cdots \cap \tilde{g}_d (K_d \cap K_1),
  \]
  which are  precisely the maximal faces of the coset complex
  $\X\inparen{K_1,\setdef{K_j \cap K_1}{j \in \set{2,\ldots,
        d}}}$.  This establishes the isomorphism between $X_v$ and $\X\inparen{K_1,\setdef{K_j \cap K_1}{j \in \set{2,\ldots,
        d}}}$.
\end{proof}

\section{A concrete instantiation}

The simplicial complex of Kaufman and Oppenheim~\cite{KaufmanO2018} is
a specific instantiation of the above \emph{coset complex}
construction. This section is devoted to an exposition of this instantiation of Kaufman and Oppenheim.  We will need some notation to describe their group.

\subsection*{Notation}

\begin{itemize}
\item Let $R$ denote the ring $\frac{\F_p[t]}{\inangle{t^s}}$. This is
  a ring whose elements can be identified with polynomials in
  $\F_p[t]$ of degree less than $s$ (where addition and multiplication
  are performed modulo $t^s$). We will think of $p$ as some fixed
  prime power, $t$ as a formal variable and $s$ as a growing integer.
\item For any $d\geq 3$, and $1\leq i,j\leq d$ with $i \neq j$ and an element $r \in R$,
  we define  $e_{i,j}(r)$ to be the $d\times d$ elementary matrix with $1$'s on the diagonal and $r$ on the $(i,j)$-th entry.

  For the sake of notational convenience, we shall abuse this notation
  and use $e_{i+d,j}(r), e_{i,j+d}(r)$ etc. to refer to $e_{i,j}(r)$ by
  \emph{wrapping around} if necessary. For example, $e_{d,d+1}(r)$
  refers to $e_{d,1}(r)$.
\end{itemize}

\noindent
We are now ready to describe the groups in the construction.
\begin{align*}
  \text{For $i\in \set{1,\ldots, d}$,}\quad K_i & = \inangle{e_{j,j+1}(at + b)\;:\; a,b \in \F_p\,,\,j \in [d] \setminus \set{i}}.\\
  G & = \inangle{K_1,\ldots, K_d}
\end{align*}
Each $K_i$ is generated by elementary matrices that have $1$'s on the
diagonal and an arbitrary linear polynomial in one entry of the
generalised diagonal $\set{(i,j)\;:\; i+1= j \bmod d}$.

It so happens that the group $G$ generated by the subgroups
$K_1, \ldots, K_d$ is $\mathrm{SL}_d(R)$, the group of $d\times d$
matrices with entries in $R$ whose determinant is $1$ (in $R$). This
is a non-trivial fact. All we will need is the simpler fact that $|G|$
grows exponentially with $s$ (for fixed $p$ and $d$) while the size of
the groups $K_i$ are functions of $p$ and $d$ (and independent of
$s$). This will follow from the sequence of observations and lemmas
developed in the following section.\\

\noindent Given the above definition, there are two ``different'' subgroups we can define. 
\begin{align*}
K_S & = \intersection_{i\in S} K_i,\\
\widetilde{K_S} &:= \inangle{e_{i,i+1}(at+b)\;:\; a,b\in \F_p\,,\, i\notin S}.
\end{align*}
That is, $K_S$ is the intersection of the groups $\setdef{K_i}{i\in S}$, and $\widetilde{K_S}$ is the group generated by the intersection of the generators of the $K_i$'s. Thus, clearly, $\widetilde{K_S} \subseteq K_S$. The following lemma shows that in fact the two groups are identical.

\begin{lemma}[Intersections of $K_i$'s]\label{cor:intersection-property-KS}
  For any $S \subseteq [d]$,
  \[
   \widetilde{K_{S}} = \inangle{e_{i,i+1}(at +b)\;:\; a,b\in \F_p\,,\, i\notin S}
    = \Intersection_{i\in S} K_i = K_S
  \]
  In other words, the group generated by the intersection of generators equals the group intersection. 
\end{lemma}

We will prove this lemma in the following section by giving an
explicit description of the groups that makes the above lemma
evident. An immediate consequence of this lemma is that $K_{[d]} =
\{\mathrm{id}\}$ and hence $\X(d-1)$, the set of maximal faces, is in
1-1 correspondence with the group $G$.

\subsection{Explicit description of the groups}

The following is an easy consequence of the definition of
$e_{i,j}(r)$. Note that $e_{i,j}(r)$ is defined only if $i \neq j$. 

\begin{observationwp}\label{obs:comm-generators}

  \begin{enumerate}[(a)]
    \item\label{item:comm-gen-sum} \emph{Sum:} $e_{i,j}(r_1) \cdot e_{i,j}(r_2) =
      e_{i,j}(r_1 + r_2)$.

      As a corollary, $e_{i,j}(r)^{-1} = e_{i,j}(-r)$. 
     \item\label{item:comm-gen-product} \emph{Product:} If $i \neq \ell$, the commutator\footnote{The commutator of two elements $g,h$, denoted by $[g,h]$ is defined as $g^{-1} h^{-1} g h$. (\autoref{defn:commutator})}
       $[e_{i,j}(r_1),e_{k,\ell}(r_2)]$ behaves as follows.
  \[
    [e_{i,j}(r_1),e_{k,\ell}(r_2)] = 
    \begin{cases}
      e_{i,\ell}(r_1r_2) & \text{if $j = k$},\\
      \mathrm{id} & \text{if $j \neq k$}.
    \end{cases}\qedhere
  \]
  \end{enumerate}
\end{observationwp}
\begin{proof}
  Let $\mu_{i,j}$ denote the matrix that has a $1$ at the $(i,j)$-th entry, and $0$ everywhere. Then, \eqref{item:comm-gen-sum} follows as
  \begin{align*}
    e_{i,j}(r_1) e_{i,j}(r_2) &= (I + \mu_{i,j} r_1) \cdot (I + \mu_{i,j}(r_2))\\
    & = I + \mu_{i,j} \cdot (r_1 + r_2)\qquad\text{(since $\mu_{i,j}^2 = 0$ when $i\neq j$)}.
  \end{align*}

  As for \eqref{item:comm-gen-product}, we follow along a similar calculation. Note that 
  \[
  \mu_{i,j} \cdot \mu_{k,\ell} = \begin{cases} 0 & \text{if $j\neq k$}\\ \mu_{i,\ell} & \text{if $j = k$}.\end{cases}
  \]
  Therefore, 
  \begin{align*}
    [e_{i,j}(r_1), e_{k,\ell}(r_2)] & = (I - \mu_{i,j} r_1)\cdot (I - \mu_{k,\ell}r_2) \cdot (I + \mu_{i,j} r_1)\cdot (I + \mu_{k,\ell}r_2).
  \end{align*}
  When $j = k$ (along with the assumption that $i\neq j$, $k\neq \ell$), this simplifies to
  \begin{align*}
    [e_{i,j}(r_1), e_{k,\ell}(r_2)] &= (I - \mu_{i,j}r_1 - \mu_{k,\ell}r_2 + \mu_{i,\ell} \cdot r_1r_2) \cdot (I + \mu_{i,j}r_1 + \mu_{k,\ell}r_2 + \mu_{i,\ell} \cdot r_1r_2)\\
    & = I + \mu_{i,j}(r_1 - r_1) + \mu_{k,\ell}(r_2 - r_2) + \mu_{i,\ell}(r_1r_2 + r_1r_2 - r_1r_2)\\
    & = I + r_1r_2 \mu_{i,\ell}.
  \end{align*}
  If $j \neq k$, then we get
  \begin{align*}
  [e_{i,j}(r_1), e_{k,\ell}(r_2)] & = (I - \mu_{i,j} r_1)\cdot (I - \mu_{k,\ell}r_2) \cdot (I + \mu_{i,j} r_1)\cdot (I + \mu_{k,\ell}r_2).\\
  & = I + \mu_{i,j}(r_1 - r_1) + \mu_{k,\ell}(r_2 - r_2) = I\qedhere
  \end{align*}
\end{proof}

\noindent 
Therefore, for distinct $i,j,k \in [d]$ (which exist when $d\geq 3$), we have
\[
  \insquare{e_{i,j}(r_1), \insquare{e_{j,i}(r_2), e_{i,j}(r_3)}} = e_{i,j}(r_1r_2r_3)
\]
Thus, for all $d \geq 3$, using the above observation along with \autoref{obs:comm-generators}\eqref{item:comm-gen-sum}, we get that $e_{i,j}(r)$ for any $r\in R$ can be generated by $\setdef{e_{k,\ell}(at + b)}{k,\ell \in [d]\;,\; a,b\in \F_p\;}$. 
This in particular implies that $|G|$ is at least $p^s$.
On the other hand, the size of $K_i$ depends only on $d,p$ and is independent of $s$.
The lemma below describes $K_d$; the other $K_i$'s are just rearrangements of rows and columns in $K_d$.

\begin{lemma}[Explicit description of $K_d$]
  The group $K_d = \inangle{e_{i,i+1}(at + b)\;:\; a,b\in \F_p\,,\, i\neq d}$ consists of matrices $A = (A_{i,j})$ of the following form:
  \begin{align*}
    A_{i,j} = \begin{cases}
      1 & \text{if }i = j,\\
      \text{a polynomial of degree $ \leq r$} & \text{if }j - i = r,\\
      0 & \text{if }i > j.
    \end{cases}
  \end{align*}
  \noindent 
  More generally, stating the above differently, for any $n \in [d]$, the group $K_n = \inangle{e_{i,i+1}(at + b)\;:\; a,b\in \F_p\,,\, i\neq n}$ consists of matrices $A = (A_{i,j})$ of the following form:
  \begin{align*}
    A_{i,j} = \begin{cases}
      1 & \text{if }i = j,\\
      \text{a polynomial of degree $(j - i)\bmod{d}$} & \text{if }j\neq i \text{ and }n\not\in\set{i,i+1,\ldots,j-1}_{\bmod{d}},\\
      0 & \text{otherwise}.
    \end{cases}
  \end{align*}
\end{lemma}
\begin{proof}
Follows easily from repeated applications of \cref{obs:comm-generators}. 
\end{proof}

\noindent
Therefore, we can obtain a crude bound of $|K_i| \leq p^{O(d^3)}$ for any $i$. Also, the above lemma also gives an explicit description of the groups $K_S$. 

\begin{corollary}[Explicit description of $K_S$]\label{cor:exp-desc-K_S}
For any subset $S \subseteq [d]$, the group $K_S = \intersection_{i\in S} K_i$ consists of matrices $A = (A_{i,j})$ of the following form:
\begin{align*}
  A_{i,j} = \begin{cases}
    1 & \text{if }i = j,\\
    \text{a polynomial of degree $(j - i)\bmod{d}$} & \text{if }j\neq i \text{ and }\set{i,i+1,\ldots,j-1}_{\bmod{d}} \cap S = \emptyset,\\
    0 & \text{otherwise}.
  \end{cases}
\end{align*}
\end{corollary}

\noindent
Recall the \emph{other} set of subgroups defined for each $S \subseteq [d]$:
\[
  \widetilde{K_S} := \inangle{e_{i,i+1}(at+b)\;:\; a,b\in \F_p\,,\, i\notin S}.
\]
These groups can also be explicitly described.

\begin{lemma}[Explicit description of $\widetilde{K_S}$]\label{lem:exp-desc-K_S-tilde} For any $\emptyset \neq S \subseteq [d]$, the group  $\widetilde{K_S}$ is the set of all $d\times d$ matrices $A = (a_{ij})$ of the form
  \[
    a_{i,j} = \begin{cases}
      1 & \text{if }i = j,\\
      \text{a polynomial of degree $\leq j-i$} & \text{if }j\neq i \text{ and }\set{i,i+1,\ldots,j-1}_{\bmod{d}} \cap S = \emptyset,\\
      0 & \text{otherwise}.
    \end{cases}
  \]
\end{lemma}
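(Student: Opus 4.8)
The plan is to prove the two inclusions $\widetilde{K_S} \subseteq M_S$ and $M_S \subseteq \widetilde{K_S}$, where $M_S$ denotes the set of matrices of the stated form. The organizing observation is that, because $S \neq \emptyset$, the pattern is acyclic. Write $i \rightsquigarrow j$ for the condition that $i \neq j$ and the cyclic interval $\{i, i+1, \ldots, j-1\}_{\bmod d}$ avoids $S$, so that $M_S$ permits a nonzero entry in position $(i,j)$, of degree at most the cyclic distance $\delta(i,j) = |\{i,\ldots,j-1\}_{\bmod d}|$. Fixing some $s_0 \in S$, I first record that $i \rightsquigarrow j$ forces $i$ to precede $j$ in the linear order $\prec$ listing the indices as $s_0+1 \prec s_0+2 \prec \cdots \prec s_0+d = s_0$ (the interval $[i,j)$ avoids $s_0$, so going forward from $i$ we meet $j$ before $s_0$). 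Conjugating by the corresponding permutation therefore turns every element of $M_S$ into a unipotent upper-triangular matrix, in which the degree bound on a reachable position equals its gap in the order $\prec$; this recovers the same shape as the earlier description of $K_d$ (the case $S = \{d\}$).

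First I would check that $M_S$ is a subgroup of $\mathrm{SL}_d(R)$. The two facts needed are that $\rightsquigarrow$ is transitive and that $\delta$ is additive along it: if $i \rightsquigarrow k \rightsquigarrow j$, then $i \prec k \prec j$, so the forward arcs compose, $[i,j) = [i,k) \sqcup [k,j)$ avoids $S$, and $\delta(i,j) = \delta(i,k) + \delta(k,j)$. Given this, the $(i,j)$ entry of a product $AB$ is a sum of terms $A_{ik}B_{kj}$ with $i \rightsquigarrow k \rightsquigarrow j$, each of degree at most $\delta(i,k)+\delta(k,j) = \delta(i,j)$, and it vanishes unless $i \rightsquigarrow j$ while the diagonal stays $1$. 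Hence $M_S$ is closed under multiplication, contains the identity, is finite, and (being unitriangular after reordering) consists of determinant-$1$ matrices; a finite submonoid of the group $\mathrm{SL}_d(R)$ is a subgroup.

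The inclusion $\widetilde{K_S} \subseteq M_S$ is then immediate: each generator $e_{i,i+1}(at+b)$ with $i \notin S$ lies in $M_S$ (the single index $i$ avoids $S$, so $i \rightsquigarrow i+1$ with $\delta = 1 \geq \deg(at+b)$), and $M_S$ is a group.

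The substance, which I expect to be the main obstacle, is $M_S \subseteq \widetilde{K_S}$, needing two ingredients. \emph{(i)} For every reachable $(i,j)$ and every $r \in R$ of degree $\leq \delta(i,j)$, the matrix $e_{i,j}(r)$ lies in $\widetilde{K_S}$: iterating the commutator identity of \cref{obs:comm-generators} along the forward path $i \to i+1 \to \cdots \to j$ (whose tails all avoid $S$ by reachability, and whose indices are distinct since the path is shorter than a full cycle) produces $e_{i,j}(r_1 \cdots r_{\delta(i,j)})$ for arbitrary linear $r_m$; choosing each $r_m \in \{1, t\}$ realizes every monomial $t^m$ with $m \leq \delta(i,j)$, and the Sum rule then yields all $r$ of degree $\leq \delta(i,j)$. \emph{(ii)} Every $A \in M_S$ factors into such elementaries: I clear its off-diagonal entries by right-multiplying by $e_{i,j}(-A_{ij})$, processing reachable pairs in order of increasing $\delta$. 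Using transitivity and additivity exactly as above, clearing $(i,j)$ modifies only entries $(k,j)$ with $k \rightsquigarrow i$, all of which satisfy $\delta(k,j) > \delta(i,j)$; so each step stays inside $M_S$, respects the degree bounds, and never disturbs an already-cleared entry, and the process terminates at the identity. This exhibits $A$ as a product of elementaries from \emph{(i)}, so $A \in \widetilde{K_S}$. The care required is bookkeeping: confirming each elementary operation preserves the pattern and the degree bounds, and that the wrap-around together with several elements of $S$ (which split the arrow set into disjoint chains) is handled uniformly by the single relation $\rightsquigarrow$.
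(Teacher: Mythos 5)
Your proof is correct and follows essentially the same route as the paper: the forward-path and degree bookkeeping for the containment $\widetilde{K_S} \subseteq M_S$ (the paper expands a word in the generators and traces paths of indices, you package the identical computation as closure of $M_S$ under multiplication), and the iterated commutator identity to produce $e_{i,j}(f)$ for the reverse containment. The one place you go beyond the paper is welcome rather than divergent: the paper ends the reverse inclusion with ``from this, we can deduce that the structure is exactly as claimed,'' whereas your column-clearing factorization of an arbitrary element of $M_S$ into elementaries, processed in order of increasing cyclic distance, makes that final deduction explicit.
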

\begin{proof}
Any $A\in \widetilde{K_S}$ can be expressed as  $A = B_1\cdots B_m$ where each $B_r = e_{i_{r},i_{r}+1}(\ell_r)$, for some linear polynomial $\ell_r$, with $i_r \notin S$.
Then,
\begin{align*}
  A_{i,j} & = \sum_{\substack{i_1,\ldots, i_{m+1}\\i_1 = i\;,\;i_{m+1} = j}} (B_1)_{i_1,i_2} (B_2)_{i_2,i_3} \cdots (B_m)_{i_m,i_{m+1}}. 
\end{align*}
From the structure of each $B_r$, any nonzero contribution from the RHS must involve either $i_{r+1} = i_{r}$, or $i_{r+1} = i_r + 1$ if $r \notin S$. This forces that the only entries of $A$ that are nonzero, besides the diagonal, are at $(i,j)$ with none of $\set{i,i+1,\ldots, j-1}$ in $S$.

In the case when $\set{i,i+1,\ldots, j-1} \cap S = \emptyset$, the above
argument also shows that the entry $A_{i,j}$ has degree at most $j-i$. Furthermore, using
\cref{obs:comm-generators}, we can easily see that $e_{i,j}(f) \in \widetilde{K_S}$ for an arbitrary polynomial $f(t)$ of degree at most $j-i$. From this, we can deduce that the structure of $\widetilde{K_S}$ is exactly as claimed. 
\end{proof}

\begin{proof}[Proof of \autoref{cor:intersection-property-KS}]
Follows immediately from \autoref{cor:exp-desc-K_S} and \autoref{lem:exp-desc-K_S-tilde}. 
\end{proof}

\noindent
From this point on, since the groups $K_S$ and $\widetilde{K_S}$ are identical, we drop the tilde notation and use $K_S$ for $\widetilde{K_S}$. 

\subsection{Connectivity of the coset complex}

\begin{lemma}\label{lem:KS-intersection-property}
  Let $S \subset [d]$ with $|S| \leq d-2$. Then,
  \[
    K_S= \inangle{K_{S} \cap K_i\colon i \in [d] \setminus
      S}. 
  \]
\end{lemma}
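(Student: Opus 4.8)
The plan is to reduce everything to the explicit generator-level description supplied by \cref{cor:intersection-property-KS} and then observe that the hypothesis $|S|\le d-2$ guarantees enough ``room'' among the indices. First I would record the identity $K_S\cap K_i = K_{S\cup\{i\}}$ for each $i\in[d]\setminus S$, which is immediate from \cref{cor:intersection-property-KS}: since $K_S = \Intersection_{j\in S} K_j$, intersecting with $K_i$ gives $\Intersection_{j\in S\cup\{i\}} K_j = K_{S\cup\{i\}}$. This rewrites the right-hand side of the claim as $\inangle{K_{S\cup\{i\}} : i\in[d]\setminus S}$, and by the same corollary each $K_{S\cup\{i\}}$ is generated by the elementary matrices $e_{j,j+1}(at+b)$ with $j\notin S\cup\{i\}$.

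The inclusion $\inangle{K_S\cap K_i : i\notin S}\subseteq K_S$ is trivial, since $K_S\cap K_i\subseteq K_S$ for every $i$; so the content lies in the reverse inclusion, and for that I would argue at the level of generators. Recall from \cref{cor:intersection-property-KS} that $K_S = \inangle{e_{j,j+1}(at+b) : a,b\in\F_p,\ j\notin S}$, so it suffices to show each such generator lies in the right-hand group. Fix a generator $e_{j,j+1}(at+b)$ with $j\notin S$. The key point is that $|S|\le d-2$ forces $|[d]\setminus S|\ge 2$, so I can choose an index $i\in[d]\setminus S$ distinct from $j$. Then $j\notin S\cup\{i\}$, and hence $e_{j,j+1}(at+b)\in K_{S\cup\{i\}} = K_S\cap K_i$.

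Since every generator of $K_S$ is captured this way, we obtain $K_S\subseteq\inangle{K_S\cap K_i : i\notin S}$, which together with the trivial inclusion yields the equality. The only place where the hypothesis enters — and hence the single point worth flagging — is the existence of a second free index $i\neq j$ outside $S$. This is not so much an obstacle as the crux of the statement: if $|S|=d-1$ then $[d]\setminus S$ is a singleton, there is no index $i$ in which to ``hide'' $j$, and both the argument and the conclusion fail. Everything else is a direct appeal to the explicit descriptions already established.
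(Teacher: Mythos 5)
Your proposal is correct and follows essentially the same route as the paper: the containment $\supseteq$ is trivial, and for the reverse one takes a generator $e_{j,j+1}(at+b)$ of $K_S$ with $j\notin S$, uses $|S|\le d-2$ to find $i\in[d]\setminus(S\cup\{j\})$, and concludes via \cref{cor:intersection-property-KS} that the generator lies in $K_S\cap K_i$. The extra step of identifying $K_S\cap K_i$ with $K_{S\cup\{i\}}$ is a harmless elaboration of what the paper leaves implicit.
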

\begin{proof}
  It is clear that $K_S$ is a superset of the RHS. It only remains to show that the other containment also holds. To see this, consider an arbitrary generator $e_{j,j+1}(r)$ of $K_S$. Since $j \notin S$ and $|S| \leq d-2$, there is some $i \in [d] \setminus \inparen{S \union \set{j}}$. Therefore, $e_{j,j+1}(r) \in K_S \cap K_i$ and hence is generated by the RHS. 
\end{proof}

\noindent
Combining the above lemma with \cref{lem:conn-criterion} and \cref{lem:link-structure}, we have the following corollary.

\begin{corollarywp}
  For the coset complex $\mathcal{X}(G,\set{K_1,\ldots, K_d})$ defined by the above groups, the $1$-skeleton of every link is connected. 
\end{corollarywp}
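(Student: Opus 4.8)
The plan is to reduce connectivity of an arbitrary link to a purely group-theoretic generation statement and then to quote \cref{lem:KS-intersection-property}. The links that carry an honest graph as their $1$-skeleton are those of faces $v$ of type $S \subseteq [d]$ with $|S| \le d-2$: by the coset-complex dimension count, the link $X_v$ has dimension $(d-1) - |S|$, which is at least $1$ precisely when $|S| \le d-2$. These are exactly the links that the definition of a $\lambda$-spectral HDX requires us to control (faces $s \in \X(i)$ with $-1 \le i \le d-3$, i.e. type $|S| = i+1 \le d-2$), so no relevant link is omitted.

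First I would fix such a face $v$ of type $S$ and apply \cref{lem:link-structure} to identify the link $X_v$ with the coset complex $\X(K_S, \setdef{K_S \cap K_i}{i \notin S})$. Note that \cref{lem:link-structure} already reduced the general case to links of a standard coset via left-translation invariance, so there is nothing further to normalize here. Next I would apply the connectivity criterion \cref{lem:conn-criterion} to this coset complex: its $1$-skeleton is connected if and only if the ambient group $K_S$ is generated by the defining family, that is, $K_S = \inangle{K_S \cap K_i : i \in [d]\setminus S}$.

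Finally, this generation identity is exactly the conclusion of \cref{lem:KS-intersection-property}, which holds precisely in the regime $|S| \le d-2$. Chaining the three steps therefore yields connectivity of the $1$-skeleton of every (at least one-dimensional) link. The case $S = \emptyset$, corresponding to the whole complex, is subsumed: the criterion then reads $G = \inangle{K_1,\ldots,K_d}$, which holds by definition of $G$.

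I do not expect a genuine obstacle: the corollary is an immediate assembly of the three preceding lemmas, and all the substantive work has already gone into establishing the explicit intersection property \cref{cor:intersection-property-KS} and the generation statement \cref{lem:KS-intersection-property}. The only point demanding a little care is the bookkeeping of the range of $|S|$ — one must verify that the hypothesis $|S| \le d-2$ of \cref{lem:KS-intersection-property} covers exactly those faces whose links are genuine graphs, so that the reduction through \cref{lem:link-structure} and \cref{lem:conn-criterion} applies uniformly.
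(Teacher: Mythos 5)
Your proposal is correct and follows exactly the route the paper intends: identify the link via \cref{lem:link-structure}, reduce connectivity to generation via \cref{lem:conn-criterion}, and supply the generation identity from \cref{lem:KS-intersection-property} (using \cref{cor:intersection-property-KS} to identify $K_S$ with $\bigcap_{i\in S}K_i$). The paper states the corollary without proof precisely because it is this immediate combination, and your bookkeeping of the range $|S|\le d-2$ and the $S=\emptyset$ case is the right level of care.
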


\section{Spectral expansion of the complex}

In this section we prove that the coset complex $\X(G,\set{K_1,\ldots,
  K_d})$ is a good spectral HDX. The \autoref{thm:trickle-down} states that it
suffices to show that the $1$-dimensional links of faces in $\X(d-3)$ are good spectral expanders.

\subsection{Structure of $1$-dimensional links}

One dimensional links of the coset complex constructed are links of $v
\in \mathcal{X}(G,\set{K_1,\ldots, K_d})$ of size exactly $d-2$ (which are elements of $\X(d-3)$). Any
such $v$ can be written as $\set{gK_1,\ldots, gK_d} \setminus
\set{gK_{i},gK_{j}}$ for $i,j\in [d]$ with $i \neq j$ and $g \in
G$. Since the link of $v$ is isomorphic to the link of $\set{K_1,\ldots, K_d} \setminus
\set{K_{i},K_{j}}$, we might as well assume that $g=\mathrm{id}$.
These happen to be of two types depending on whether $i$ and $j$ are
consecutive or not. 

\begin{observation}
  Consider $v = \set{K_1,\ldots, K_d} \setminus \set{K_i,K_j}$ where $i$ and $j$ are \emph{not} consecutive (i.e. $(i-j) \neq \pm 1 \bmod d$). Then the $1$-dimensional link of $v$ is a complete bipartite graph.
\end{observation}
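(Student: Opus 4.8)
Let $S = [d]\setminus\{i,j\}$, so that $v$ has type $S$ with $|S| = d-2$. My first step is to invoke \cref{lem:link-structure}, which identifies the link $X_v$ with the coset complex $\X(K_S, \set{K_S\cap K_i, K_S\cap K_j})$. Since this coset complex is built from only two subgroups it is one-dimensional, and its $1$-skeleton is bipartite by the partite structure: one side consists of the cosets of $K_S\cap K_i$ inside $K_S$ (the type-$i$ vertices) and the other of the cosets of $K_S\cap K_j$ (the type-$j$ vertices). In particular there are no edges within a side, so proving complete bipartiteness amounts to showing every cross pair is joined.

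The plan is then to reduce complete bipartiteness to a single group-theoretic identity. By \cref{obs:nonempty-coset-intersection}, the vertices $g(K_S\cap K_i)$ and $h(K_S\cap K_j)$ are adjacent exactly when $g^{-1}h \in (K_S\cap K_i)(K_S\cap K_j)$. Hence the bipartite graph is complete if and only if every element of $K_S$ lies in the product set, i.e.
\[
  (K_S\cap K_i)(K_S\cap K_j) = K_S .
\]
One inclusion is automatic; for the other, taking $g=\mathrm{id}$ and letting $h$ range over $K_S$ shows that completeness forces the product set to exhaust $K_S$. So the entire statement reduces to this product identity.

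To establish the identity I would pass to the explicit generators. By \cref{cor:intersection-property-KS}, $K_S\cap K_i = K_{S\cup\set{i}}$ and $K_S\cap K_j = K_{S\cup\set{j}}$; since the complements of $S\cup\set{i}$ and $S\cup\set{j}$ are the singletons $\set{j}$ and $\set{i}$, these subgroups are exactly $\inangle{e_{j,j+1}(at+b)\;:\;a,b\in\F_p}$ and $\inangle{e_{i,i+1}(at+b)\;:\;a,b\in\F_p}$ respectively, while $K_S = \inangle{e_{i,i+1}(at+b),\, e_{j,j+1}(at+b)\;:\;a,b\in\F_p}$. The crucial point is that these two subgroups commute elementwise: by the commutator rule of \cref{obs:comm-generators}, $[e_{i,i+1}(r_1), e_{j,j+1}(r_2)] = \mathrm{id}$ unless $i+1 \equiv j$ or $j+1 \equiv i \pmod d$, and both of these are precisely the forbidden consecutive cases. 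Once elementwise commuting is in hand, the product set $(K_S\cap K_i)(K_S\cap K_j)$ is itself a subgroup and therefore equals $\inangle{e_{i,i+1},e_{j,j+1}} = K_S$, which is the desired identity.

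The main (indeed the only) place where the hypothesis is used is the verification of elementwise commuting, and the content there is merely that non-consecutiveness of $i,j$ modulo $d$ kills both relevant commutators. The care required is to check both orderings $[e_{i,i+1},e_{j,j+1}]$ and $[e_{j,j+1},e_{i,i+1}]$ and to track the wrap-around indexing, since boundary values such as $j=1$ or $j=d-1$ are exactly the consecutive configurations that must be excluded; outside these, \cref{obs:comm-generators} gives the trivial commutator and the argument closes.
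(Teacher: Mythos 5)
Your proposal is correct and follows essentially the same route as the paper: identify the link as the coset complex on the two subgroups generated by $e_{i,i+1}(at+b)$ and $e_{j,j+1}(at+b)$, observe that non-consecutiveness of $i,j$ makes these subgroups commute elementwise via \cref{obs:comm-generators}, conclude that their product is all of $K_S$, and apply \cref{obs:nonempty-coset-intersection} to get completeness. Your write-up is somewhat more explicit than the paper's (reducing completeness to the product identity $(K_S\cap K_i)(K_S\cap K_j)=K_S$ and checking both commutator orderings), but the underlying argument is identical.
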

\begin{proof}
  Note that since $j \neq i\pm 1$, we have $[e_{i,i+1}(r_1), e_{j,j+1}(r_2)] = \mathrm{id}$ by \autoref{obs:comm-generators}. Hence, these two elements commute.

  The link of $v$ corresponds to the coset complex $\mathcal{X}(H,\set{H_1,H_2})$ where
  \begin{align*}
    H & = K_{[d]\setminus\set{i,j}} =  \inangle{e_{i,i+1}(at+b),e_{j,j+1}(at+b) \;:\; a,b\in \F_p},\\
    H_1 & = K_{[d]\setminus\set{i}} =  \inangle{e_{i,i+1}(at+b) \;:\; a,b\in \F_p},\\
    H_2 & = K_{[d]\setminus\set{j}} =  \inangle{e_{j,j+1}(at+b) \;:\; a,b\in \F_p}.
  \end{align*}
  Thus, the groups $H_1$ and $H_2$ commute with each other and hence any element of $h\in H$ can be written as $h = g_1 \cdot g_2$ where $g_1 \in H_1$ and $g_2 \in H_2$. \autoref{obs:nonempty-coset-intersection} implies that the resulting graph is the complete bipartite graph.    
\end{proof}

The interesting case is when $v = \set{K_1,\ldots, K_d}\setminus\set{K_i,K_{i+1}}$. 
Without loss of generality, we may focus on the link of $v = \set{K_3,K_4,\ldots, K_d}$. This corresponds to the coset complex $\mathcal{X}(H,\set{H_1, H_2})$
where
\begin{align*}
  H & = K_{3,4,\ldots, d} = \inangle{e_{1,2}(at + b), e_{2,3}(at + b) \;:\; a,b \in \F_p},\\
  H_1 & = K_{2,3,4,\ldots, d} = \inangle{e_{1,2}(at + b) \;:\; a,b \in \F_p},\\
  H_2 & = K_{1,3,4,\ldots, d} = \inangle{e_{2,3}(at + b) \;:\; a,b \in \F_p}.
\end{align*}
Hence, it suffices to focus on the first three rows and columns of these matrices as the rest of them are constant. Written down explicitly, 
\begin{align*}
H & = \setdef{\begin{bmatrix}
  1 & \ell_1 & Q\\
  0 & 1 & \ell_2\\
  0 & 0 & 1
\end{bmatrix}}{\begin{matrix}\ell_1,\ell_2\text{ are linear polynomials in $\F_p[t]$ }\\\text{and $Q$ is a quadratic polynomial in $\F_p[t]$}
\end{matrix}},\\
  H_1 & = \setdef{\begin{bmatrix}
  1 & \ell & 0\\
  0 & 1 & 0\\
  0 & 0 & 1
\end{bmatrix}}{\ell\text{ is a linear polynomial in $\F_p[t]$}},\\
    H_2 & = \setdef{\begin{bmatrix}
  1 & 0 & 0\\
  0 & 1 & \ell\\
  0 & 0 & 1
\end{bmatrix}}{\ell\text{ is a linear polynomial in $\F_p[t]$}}.
\end{align*}
Multiplication of an arbitrary element of $H$ with an arbitrary element of $H_1$ is of the form
\[
\begin{bmatrix}
1 & \ell_1 & Q\\
0 & 1 & \ell_2\\
0 & 0 & 1
\end{bmatrix} \cdot \begin{bmatrix}
  1 & \ell & 0\\
  0 & 1    & 0\\
  0 & 0    & 1
  \end{bmatrix} = \begin{bmatrix}
    1 & \ell_1 + \ell & Q\\
    0 & 1 & \ell_2\\
    0 & 0 & 1
    \end{bmatrix}.
\]
Note that the a unique choice of $\ell$ that makes the $(1,2)$-th entry of the RHS zero is $\ell = -\ell_1$. Thus, each coset of $H_1$ in $H$ has a unique representative of the form $M_1(\ell,Q)$ described below, and similarly, each coset of $H_2$ of $H$ has a unique representative of the form $M_2(\ell,Q)$. 
\[
M_1(\ell,Q) :=   \begin{bmatrix}
    1 & 0 & Q\\
    0 & 1 & \ell\\
    0 & 0 & 1
  \end{bmatrix} \quad,\quad M_2(\ell,Q) :=   \begin{bmatrix}
    1 & \ell & Q\\
    0 & 1 & 0\\
    0 & 0 & 1
  \end{bmatrix}
\]
respectively, 
where $\ell$ is a linear polynomial and $Q$ is a quadratic polynomial in $\F_p[t]$ since any arbitrary element of $H$ can be uniquely written as
\begin{align*}
  \begin{bmatrix}
    1 & \ell_1 & Q\\
    0 & 1 & \ell_2\\
    0 & 0 & 1
  \end{bmatrix} & = \begin{bmatrix}
    1 & \ell_1 & Q - \ell_1\ell_2\\
    0 & 1 & 0\\
    0 & 0 & 1
  \end{bmatrix}\begin{bmatrix}
    1 & 0 & 0\\
    0 & 1 & \ell_2\\
    0 & 0 & 1
  \end{bmatrix}\\
  &= \begin{bmatrix}
    1 & 0 & Q - \ell_1\ell_2\\
    0 & 1 & \ell_2\\
    0 & 0 & 1
  \end{bmatrix}
  \begin{bmatrix}
    1 & \ell_1 & 0\\
    0 & 1 & 0\\
    0 & 0 & 1
  \end{bmatrix}.
\end{align*}

\begin{lemma}
  For linear polynomials $\ell_1,\ell_2 \in \F_p[t]$ and quadratic polynomials $Q_1,Q_2 \in \F_p[t]$, we have that
  \[
    M_1(\ell_1,Q_1) H_1 \cap M_2(\ell_2,Q_2) H_2 \neq \emptyset \quad\Longleftrightarrow\quad \ell_1\ell_2 = Q_1 - Q_2.
  \]
\end{lemma}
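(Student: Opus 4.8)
The plan is to apply \cref{obs:nonempty-coset-intersection} to convert the statement about two cosets into a single membership condition. Taking $K_1 = H_1$, $K_2 = H_2$, $g_1 = M_1(\ell_1,Q_1)$ and $g_2 = M_2(\ell_2,Q_2)$, that observation tells us
\[
  M_1(\ell_1,Q_1) H_1 \cap M_2(\ell_2,Q_2) H_2 \neq \emptyset \quad\Longleftrightarrow\quad M_1(\ell_1,Q_1)^{-1} M_2(\ell_2,Q_2) \in H_1 H_2.
\]
Thus the whole lemma reduces to two concrete $3\times 3$ tasks: describing the set $H_1 H_2$ explicitly, and computing the product on the right so that we can read off exactly when it lands in $H_1 H_2$.

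First I would record the shape of $H_1 H_2$. Multiplying a generic element of $H_1$ (carrying a linear polynomial $a$ in the $(1,2)$-slot) by a generic element of $H_2$ (carrying a linear polynomial $b$ in the $(2,3)$-slot) produces the unitriangular matrix with rows $(1,a,ab)$, $(0,1,b)$, $(0,0,1)$. The key structural feature is that the $(1,3)$-entry of \emph{every} element of $H_1 H_2$ equals the product of its $(1,2)$- and $(2,3)$-entries; conversely, any unitriangular matrix whose $(1,3)$-entry equals that product lies in $H_1 H_2$, with $a$ and $b$ recovered uniquely as the $(1,2)$- and $(2,3)$-entries. Hence membership in $H_1 H_2$ is precisely the single polynomial identity "$(1,3)$-entry $=$ (($1,2$)-entry)$\cdot$(($2,3$)-entry)".

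Next I would carry out the multiplication. Inverting $M_1(\ell_1,Q_1)$ gives the unitriangular matrix with $(1,2)$-entry $0$, $(2,3)$-entry $-\ell_1$, and $(1,3)$-entry $-Q_1$; multiplying on the right by $M_2(\ell_2,Q_2)$ yields
\[
  M_1(\ell_1,Q_1)^{-1} M_2(\ell_2,Q_2) = \begin{bmatrix} 1 & \ell_2 & Q_2 - Q_1 \\ 0 & 1 & -\ell_1 \\ 0 & 0 & 1 \end{bmatrix}.
\]
Applying the membership criterion from the previous step, this matrix lies in $H_1 H_2$ if and only if its $(1,3)$-entry equals the product of its $(1,2)$- and $(2,3)$-entries, i.e. $Q_2 - Q_1 = \ell_2\cdot(-\ell_1) = -\ell_1\ell_2$, which rearranges to $\ell_1\ell_2 = Q_1 - Q_2$, exactly the claimed condition.

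There is no genuine obstacle here; every step is a direct $3\times 3$ calculation. The only points needing a moment of care are in the characterization of $H_1 H_2$: one should check that the factorization of a unitriangular matrix into an $H_1$-factor times an $H_2$-factor is unique, so that the single entrywise identity is both necessary and sufficient for membership, and one should note that the degrees are consistent, since $\ell_1,\ell_2$ linear makes $\ell_1\ell_2$ a genuine quadratic matching the quadratic $(1,3)$-entry $Q_1-Q_2$.
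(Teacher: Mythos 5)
Your proposal is correct and follows the paper's proof essentially verbatim: reduce via \cref{obs:nonempty-coset-intersection} to the membership $M_1(\ell_1,Q_1)^{-1}M_2(\ell_2,Q_2)\in H_1H_2$, compute the $3\times 3$ product, and read off the condition $\ell_1\ell_2=Q_1-Q_2$. The only difference is that you spell out the characterization of $H_1H_2$ (the $(1,3)$-entry equals the product of the $(1,2)$- and $(2,3)$-entries, with linear off-diagonal entries), which the paper leaves implicit; that is a welcome bit of extra care rather than a different argument.
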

\begin{proof}
  Note that matrices in $H_1H_2$ are of the form
  \[
    \begin{bmatrix}
      1 & \ell_1 & 0\\
      0  & 1 & 0\\
      0  & 0 & 1
    \end{bmatrix}
    \begin{bmatrix}
      1 & 0 & 0\\
      0  & 1 & \ell_2\\
      0  & 0 & 1
    \end{bmatrix} = \begin{bmatrix}
      1 & \ell_1 & \ell_1\ell_2\\
      0  & 1 & \ell_2\\
      0  & 0 & 1
    \end{bmatrix}.
  \]
  By \cref{obs:nonempty-coset-intersection}, the cosets have a non-empty intersection if and only if
  \[
    H_1 H_2 \ni M_1(\ell_1,Q_1)^{-1} M_2(\ell_2,Q_2) = \begin{bmatrix}
      1 & 0 & -Q_1\\
      0 & 1 & -\ell_1\\
      0 & 0 & 1
    \end{bmatrix}\begin{bmatrix}
      1 & \ell_2 & Q_2\\
      0 & 1 & 0\\
      0 & 0 & 1
    \end{bmatrix} = \begin{bmatrix}
      1 & \ell_2 & Q_2 - Q_1\\0 & 1 & -\ell_1 \\ 0 & 0 & 1
    \end{bmatrix}
  \]
  which happens if and only if $(\ell_2)(-\ell_1) = Q_2 - Q_1$ which is the same as $Q_1 - Q_2 = \ell_1\ell_2$. 
\end{proof}

Therefore, the $1$-dimensional link is the bipartite graph $A=(U,V,E)$
with left and right vertices identified by pairs $(\ell,Q)$ where
$\ell$ and $Q$ are linear and quadratic polynomials in $\F_p[t]$
respectively, with
$(\ell_1,Q_1) \sim (\ell_2,Q_2) \Leftrightarrow \ell_1\ell_2 = Q_1 +
Q_2$ (by associating $M_1(\ell,Q)$ with the tuple $(\ell,Q)$ on the
left, and $M_2(\ell,Q)$ with the tuple $(\ell,-Q)$ on the right).

Note that $A$ is an undirected, $p^2$-regular bipartite graph with  $p^5$ vertices on each side.
It suffices to show that $A$ is a good expander.

Kaufman and Oppenheim~\cite{KaufmanO2018} prove the expansion
properties of this graph using representation theory of the associated
groups, while we directly analyse the spectral gap of the adjacency
matrix associated with this graph. O'Donnell and Pratt~\cite[Case 2 in
the Proof of Theorem 3.23]{ODonnellP2022} give yet another proof of the spectral gap using the Polynomial Identity Lemma (also referred to as the Schwartz-Zippel lemma).

\subsection{A related graph}

The following graph is the ``lines-points'' or the ``affine plane'' graph used by Reingold, Vadhan and Wigderson~\cite{ReingoldVW2005} (as the \emph{base graph} in construction of constant-degree expanders, using the zig-zag product). 
Let $\F_q$ be a finite field.
Consider the bipartite graph $B_q = (U',V', E')$ defined as follows:
\begin{align*}
  U'=V' & = \F_q \times \F_q,&E' & = \setdef{\inparen{(a,b),(c,d)}}{ac = b+d}.
\end{align*}
Note that the graph $B_q$ is $q$-regular as for any vertex $a,b,c \in F_q$, there is a unique $d \in \F_q$ such that $ac = b+d$ and thus the vertex $(a,b)$ has exactly $q$ neighbours in $B_q$. 
\begin{lemma}
  The $q$-regular bipartite graph $B_q$ is a
  $\frac{1}{\sqrt{q}}$-one-sided-spectral expander. 
\end{lemma}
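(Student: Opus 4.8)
The plan is to bound the singular values of the biadjacency matrix of $B_q$ directly, via a one-line common-neighbour count. Let $M$ be the $\F_q^2 \times \F_q^2$ biadjacency matrix, with $M_{(a,b),(c,d)} = 1$ exactly when $ac = b+d$. For a fixed left vertex $(a,b)$ and each $c \in \F_q$ there is a unique $d = ac - b$ with $((a,b),(c,d)) \in E'$, so $B_q$ is $q$-regular (and symmetrically on the right). Being $q$-regular and bipartite, its normalized adjacency matrix is $\frac{1}{q}\left(\begin{smallmatrix} 0 & M \\ M^\top & 0 \end{smallmatrix}\right)$, whose eigenvalues are exactly $\pm \sigma/q$ as $\sigma$ ranges over the singular values of $M$. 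The top singular value is $\sigma_1 = q$ (the degree, with eigenvector the all-ones vector), giving $\lambda_1 = 1$; hence it suffices to show that the second-largest singular value of $M$ is $\sqrt{q}$, i.e. that $\lambda_2 = \sigma_2/q = 1/\sqrt{q}$.

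To compute the singular values I would look at $M M^\top$, whose $\big((a,b),(a',b')\big)$ entry counts common neighbours of the two left vertices. A common neighbour $(c,d)$ satisfies $ac = b+d$ and $a'c = b'+d$, so $(a-a')c = b-b'$. If $a \neq a'$ this determines $c$, and then $d = ac - b$, uniquely, giving exactly one common neighbour; if $a = a'$ a common neighbour exists only when $b = b'$, in which case the two vertices coincide and share all $q$ neighbours. Therefore
\[
  M M^\top = q\, I + N, \qquad N_{(a,b),(a',b')} = \begin{cases} 1 & a \neq a', \\ 0 & a = a'. \end{cases}
\]

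It remains to diagonalize $N$. Writing $\R^{\F_q^2} = \R^{\F_q} \otimes \R^{\F_q}$ with the first factor indexing the $a$-coordinate, one has $N = J_q \otimes J_q - I_q \otimes J_q$, where $J_q$ is the all-ones $q \times q$ matrix. Let $\Pi_0$ be the orthogonal projection onto the constant vector and $\Pi_1 = I_q - \Pi_0$, so that $J_q = q\,\Pi_0$. Expanding gives $N = (q^2 - q)\,\Pi_0 \otimes \Pi_0 - q\,\Pi_1 \otimes \Pi_0$, which is already diagonal on the four mutually orthogonal subspaces $\operatorname{im}(\Pi_\alpha \otimes \Pi_\beta)$: the eigenvalues of $N$ are $q^2 - q$ (multiplicity $1$, on the all-ones vector), $-q$ (multiplicity $q-1$), and $0$ (multiplicity $q^2 - q$). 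Adding $q\,I$, the eigenvalues of $M M^\top$ are $q^2$ (once), $q$ (multiplicity $q^2 - q$), and $0$ (multiplicity $q-1$). Hence the singular values of $M$ are $q$, $\sqrt{q}$, and $0$, so $\sigma_2 = \sqrt{q}$ and $\lambda_2 = 1/\sqrt{q}$, as claimed.

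The computation is routine once set up; the only points needing care are the bookkeeping that converts ``second-largest singular value of $M$'' into the second eigenvalue of the normalized adjacency matrix, and the observation that, since we only want a onesided bound, we need not separately control the most negative eigenvalue (which equals $-1$ by bipartiteness).
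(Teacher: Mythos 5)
Your proof is correct and follows essentially the same route as the paper: the paper computes the adjacency matrix of $B_q^2$ restricted to $U'$, which is precisely your $MM^\top = qI_{q^2} + (J_q - I_q)\otimes J_q$, and reads off the second eigenvalue from there. You merely spell out the diagonalization and the singular-value bookkeeping more explicitly, which the paper leaves implicit.
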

\begin{proof}
  Let $B_q^2$ denote the graph whose adjacency matrix is the square of the adjacency matrix of $B_q$. Restricted to the vertices in $U'$, it is easy to see that
  \[
    \text{Number of edges between $(a,b)$ and $(c,d)$} = \begin{cases} 1 & \text{if $a \neq c$,}\\
      q & \text{if $a = c$ and $b = d$,}\\
      0 & \text{otherwise.}
    \end{cases}
  \]
  Therefore, the adjacency matrix of $B_q^2$ (restricted to $U'$) can be written\footnote{In the equation, the notation $\otimes$ refers to the Kronecker product, or tensor product of matrices. It is well-known that, for square matrices $A$ and $B$, the set of eigenvalues of $A \times B$ is all products of the form $\lambda_i \cdot \nu_j$ where $\lambda_i$ is an eigenvalue of $A$ and $\nu_j$ is an eigenvalue of $B$.} (under a suitable order of listing vertices) as
  \[
    q I_{q^2} + (J_q - I_q)\otimes J_q \quad\text{(where $J_q$ is the $q\times q$ matrix of $1s$)}.
  \]
  By observing that $J_q$ has eigenvalue of $q$ with multiplicity $1$, and eigenvalue $0$ with multiplicity $(q-1)$, a simple calculation shows that $B_q^2$ has eigenvalue of $q^2$ with multiplicity $1$, eigenvalue $q$ with multiplicity $q(q-1)$ and eigenvalue $0$ with multiplicity $q-1$. 
  Hence the unnormalized second largest eigenvalue of $B_q^2$ is $q$ and hence we have that the normalized second largest eigenvalue of $B_q$ is $1/\sqrt{q}$. 
\end{proof}

\subsection{Relating the graph $B_q$ with $A$}

Set $q = p^3$ so that $\F_q = \frac{\F_p[y]}{\mu(y)}$ for some
irreducible polynomial $\mu(y)$ of degree exactly $3$. Therefore, each
element in $\F_q$ is expressible as $a_0 + a_1y + a_2 y^2$ for some
$a_0,a_1,a_2\in \F_p$. Thus, the graph $B_q=(U',V',E')$ defined above,
for this setting of $q = p^3$, is a $p^3$-regular bipartite graph with
$p^6$ vertices on either side.

Let $U'' = V'' = \setdef{(a_0 + a_1y , b_0 + b_1y + b_2 y^2)}{a_0,a_1,b_0,b_1,b_2 \in \F_p}$, which is a subset of $U'$ and $V'$, respectively, of size $p^5$ each.

\begin{observation}
  The induced subgraph of $B_q$ on  $U'',V''$ is exactly the graph
  $A=(U,V,E)$ described earlier. 
\end{observation}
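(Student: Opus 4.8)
The plan is to write down the evident dictionary between the two descriptions and verify that it is a graph isomorphism. The single key identification is to match the formal variable $t$ of $\F_p[t]$ with the generator $y$ of $\F_q = \F_p[y]/\mu(y)$, where $\mu$ is irreducible of degree $3$.

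First I would set up the vertex correspondence. I send a left vertex $(\ell, Q)$ of $A$, with $\ell = a_0 + a_1 t$ linear and $Q = b_0 + b_1 t + b_2 t^2$ quadratic, to the pair $\inparen{\ell(y),\, Q(y)} = \inparen{a_0 + a_1 y,\ b_0 + b_1 y + b_2 y^2} \in \F_q \times \F_q$, and do the same for the right vertices. Because $\deg \mu = 3$, distinct polynomials of degree at most $2$ remain distinct as elements of $\F_q$, so this assignment is injective; its image is precisely the set $U''$ (linear first coordinate, quadratic second coordinate) on the left and $V''$ on the right. Hence, under the identification $t = y$, the vertex sets of $A$ are carried bijectively onto $U''$ and $V''$.

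Next I would check that adjacency is preserved. A left vertex $(\ell_1(y), Q_1(y)) \in U''$ and a right vertex $(\ell_2(y), Q_2(y)) \in V''$ are joined in $B_q$ exactly when $\ell_1(y)\,\ell_2(y) = Q_1(y) + Q_2(y)$ holds in $\F_q$. The crucial observation is that $\ell_1 \ell_2$, being a product of two linear polynomials, has degree at most $2$, and $Q_1 + Q_2$ also has degree at most $2$; since both sides are represented by polynomials of degree strictly below $\deg \mu = 3$, no reduction modulo $\mu$ takes place. Therefore the identity in $\F_q$ is equivalent to the polynomial identity $\ell_1 \ell_2 = Q_1 + Q_2$ in $\F_p[t]$, which is exactly the adjacency relation defining $A$. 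This matches edges with edges and completes the isomorphism.

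There is no genuinely hard step here; the only point that requires care --- and the whole reason the parameter is taken to be $q = p^3$ --- is the degree bound: a product of two linear polynomials has degree at most $2$, strictly less than $\deg \mu = 3$, so the ``points--lines'' relation $ac = b + d$ over $\F_q$ does not wrap around and faithfully encodes the polynomial relation defining $A$. I would state this degree inequality explicitly, since verifying it is essentially the entire content of the claim.
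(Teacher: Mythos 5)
Your proof is correct and follows essentially the same route as the paper: identify $t$ with $y$, and observe that since $\ell_1\ell_2$ and $Q_1+Q_2$ both have degree at most $2 < \deg\mu = 3$, the relation $ac = b+d$ in $\F_q$ incurs no reduction modulo $\mu$ and hence coincides with the polynomial identity defining $A$. Your added explicit verification of the vertex bijection onto $U''$ and $V''$ is a harmless elaboration of what the paper leaves implicit.
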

\begin{proof}
  Note that $((\ell_1(y),Q_1(y)),(\ell_2(y),Q_2(y))) \in E'$ if and only if
  \[
    \ell_1(y)\cdot \ell_2(y) = Q_1(y) + Q_2(y) \bmod{\mu(y)}. 
  \]
  However, since the above equation has degree at most $2$, we have
  \[
    \ell_1(y)\cdot \ell_2(y) = Q_1(y) + Q_2(y) \quad\Leftrightarrow\quad \ell_1(y)\cdot \ell_2(y) = Q_1(y) + Q_2(y) \pmod{\mu(y)},
  \]
  and the first equation is exactly the adjacency condition of the graph $A$. Hence, the induced subgraph of $B_q$ on $U'',V''$ is indeed the graph $A$. 
\end{proof}

Normally, induced subgraphs of expanders need not even be connected. However, the following lemma shows that there are some instances where we may be able to give non-trivial bounds on $\lambda$.

\begin{lemma}\label{lem:induced-subgraph-eig-bound}
  Suppose $X$ is a $d$-regular, undirected graph that is an induced subgraph of a $D$-regular graph $Y$. Then,
  \[
    \lambda(X) \leq \frac{D \lambda(Y)}{d}. 
  \]
\end{lemma}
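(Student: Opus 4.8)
The plan is to use the variational (Courant--Fischer) characterization of the second-largest eigenvalue together with the simple but crucial observation that, because $X$ is an \emph{induced} subgraph of $Y$, extending a function on $V(X)$ by zero to all of $V(Y)$ leaves the associated quadratic form unchanged. Write $M_X$ and $M_Y$ for the (unnormalized, $0/1$) adjacency matrices of $X$ and $Y$, so that the normalized adjacency matrices are $A_X = \tfrac1d M_X$ and $A_Y = \tfrac1D M_Y$. Since both graphs are regular, the all-ones vector $\mathbf 1$ is a top eigenvector of each with eigenvalue $1$, and $\lambda(X)$, $\lambda(Y)$ are the respective second-largest eigenvalues, characterized by $\lambda(X) = \max_{f \perp \mathbf 1,\, f \neq 0} \inangle{f, A_X f}/\inangle{f,f}$ and likewise for $Y$ (the standard Euclidean inner product only rescales the weighted one for regular graphs, so it does not affect the eigenvalues).

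First, I would choose a function $f\colon V(X) \to \R$ with $\sum_{u} f(u) = 0$ attaining this maximum, so that $\inangle{f, A_X f} = \lambda(X)\inangle{f,f}$, equivalently $f^{\top} M_X f = d\,\lambda(X)\, f^{\top} f$. Then I extend $f$ to $\tilde f\colon V(Y) \to \R$ by setting $\tilde f$ equal to $f$ on $V(X)$ and $0$ elsewhere. Two facts are then immediate: $\sum_{v \in V(Y)} \tilde f(v) = 0$, so $\tilde f \perp \mathbf 1$; and, since only pairs of vertices both lying in $V(X)$ contribute to $\tilde f^{\top} M_Y \tilde f$ while $X$ is induced (so $M_Y$ and $M_X$ agree on $V(X) \times V(X)$), we get $\tilde f^{\top} M_Y \tilde f = f^{\top} M_X f$ and $\tilde f^{\top}\tilde f = f^{\top} f$.

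Second, I would feed $\tilde f$ into the variational bound for $Y$. Since $\tilde f \perp \mathbf 1$ and $\tilde f \neq 0$,
\[
  \lambda(Y) \;\geq\; \frac{\inangle{\tilde f, A_Y \tilde f}}{\inangle{\tilde f, \tilde f}} \;=\; \frac{1}{D}\cdot\frac{\tilde f^{\top} M_Y \tilde f}{\tilde f^{\top}\tilde f} \;=\; \frac{1}{D}\cdot\frac{f^{\top} M_X f}{f^{\top} f} \;=\; \frac{d\,\lambda(X)}{D},
\]
and rearranging gives $\lambda(X) \leq D\lambda(Y)/d$, as claimed. Note that this chain requires no sign assumption on $\lambda(X)$ or $\lambda(Y)$: the inequality $\inangle{\tilde f, A_Y \tilde f}/\inangle{\tilde f,\tilde f} \le \lambda(Y)$ holds for \emph{every} test vector orthogonal to $\mathbf 1$.

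There is no serious obstacle here; the only point requiring genuine care is the role of the ``induced subgraph'' hypothesis, which is exactly what forces $M_Y|_{V(X)\times V(X)} = M_X$ and hence the equality $\tilde f^{\top} M_Y \tilde f = f^{\top} M_X f$ — for a non-induced subgraph the zero-extension could pick up additional edges of $Y$ among the vertices of $X$ and the argument would break. It is also worth recording that regularity of both graphs forces $d \le D$, so the factor $D/d \ge 1$ and the conclusion is a genuine (lossy) transfer of the spectral gap from $Y$ down to its induced subgraph $X$.
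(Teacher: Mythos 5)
Your proof is correct and follows essentially the same route as the paper's: both use the Courant--Fischer variational characterization, extend a test vector on $V(X)$ by zero to $V(Y)$, and use the induced-subgraph hypothesis to see that the quadratic form and the orthogonality to the all-ones vector are preserved, which bounds the Rayleigh quotient by $\lambda(Y)$ after accounting for the normalization factor $D/d$. Your additional remarks on why ``induced'' is essential and on the sign of the eigenvalues are accurate but not needed beyond what the paper records.
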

\begin{proof} The \autoref{thm:cf} characterization of the second largest
  eigenvalue tells us that \(\lambda(X) = \max_{\veca \perp \mathbf{1}_{|X|}} \frac{\veca^T G \veca}{d \cdot \veca^T \veca}\) where $G$ is the adjacency matrix of $X$.
 Consider an arbitrary $\veca \in \R^{|X|}$ such that $\veca \perp \mathbf{1}_{|X|} = 0$. Since $X$ is an induced subgraph of $Y$, the vector $\veca$ can be padded with zeroes to obtain a vector $\vecb_\veca \in \R^{|Y|}$ such that $\vecb_\veca \perp \mathbf{1}_{\abs{Y}}$. Therefore, if $A_X$ and $A_Y$ are the normalised adjacency matrices of $X$ and $Y$, we have
\[
   \lambda(X) = \max_{\veca \perp \mathbf{1}_{|X|}} \frac{\veca^T A_X
   \veca}{\veca^T \veca} =  \frac{D}{d}\cdot \max_{\veca \perp \mathbf{1}_{|X|}}
   \frac{\vecb_\veca^T A_Y \vecb_\veca}{\vecb_\veca^T \vecb_\veca}
   \leq \frac{D}{d}\cdot \max_{\vecb \perp \mathbf{1}_{|Y|}}
   \frac{\vecb^T A_Y \vecb}{\vecb^T \vecb} =
   \frac{D\lambda(Y)}{d} \enspace . \qedhere
 \]
\end{proof}

\begin{corollary}
  The graph $A(U,V,E)$ corresponding to the $1$-dimensional links of $\mathcal{X}(G,\set{K_1,\ldots, K_d})$ is a $\frac{1}{\sqrt{p}}$-one-sided-spectral expander. 
\end{corollary}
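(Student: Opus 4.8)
The plan is to assemble the three facts already established about the graph $A=(U,V,E)$ and feed them into the induced-subgraph eigenvalue bound of \cref{lem:induced-subgraph-eig-bound}. Recall that $A$ is an undirected $p^2$-regular bipartite graph on $p^5$ vertices per side, that (setting $q=p^3$) it is realized as the induced subgraph of $B_q$ on the vertex subsets $U'',V''$, and that $B_q$ is a $1/\sqrt{q}$-onesided-spectral expander, i.e.\ $\lambda(B_q)\leq 1/\sqrt{p^3}=p^{-3/2}$.

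First I would identify the parameters needed for \cref{lem:induced-subgraph-eig-bound}: the small graph is $X=A$ with degree $d=p^2$, while the ambient graph is $Y=B_q$ with degree $D=p^3$. The lemma then yields $\lambda(A)\leq \frac{D\,\lambda(B_q)}{d}$. Substituting the values gives $\lambda(A)\leq \frac{p^3\cdot p^{-3/2}}{p^2}=p^{\,3-3/2-2}=p^{-1/2}=1/\sqrt{p}$, which is exactly the claimed bound on the second-largest eigenvalue of $A$.

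The reason this suffices for a \emph{onesided} expander, and the one place any care is warranted, is that \cref{lem:induced-subgraph-eig-bound} controls $\lambda(A)=\lambda_2(A)$ through the Courant--Fischer characterization, which is precisely the quantity appearing in the definition of a onesided-spectral expander. I would explicitly remark that one cannot hope for a two-sided bound here: since $A$ is bipartite its spectrum is symmetric, so its least eigenvalue equals $-1$ and $|\lambda_{\min}|=1$ irrespective of $p$. Hence the assembly in this final step is essentially mechanical, and there is no genuine obstacle remaining; all the real work has been front-loaded into the structural observation identifying $A$ as an induced subgraph of the affine-plane graph $B_q$ and into the spectral computation for $B_q$. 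Combining the padded-vector argument of \cref{lem:induced-subgraph-eig-bound} with these two ingredients completes the proof that $A$ is a $1/\sqrt{p}$-onesided-spectral expander.
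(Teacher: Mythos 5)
Your proposal is correct and follows exactly the paper's own argument: apply \cref{lem:induced-subgraph-eig-bound} with $X=A$ ($d=p^2$) inside $Y=B_{p^3}$ ($D=p^3$, $\lambda(B_{p^3})\leq p^{-3/2}$) to get $\lambda(A)\leq 1/\sqrt{p}$. Your added remark that the bound is necessarily onesided because $A$ is bipartite is a correct observation that the paper also makes later, in its discussion of two-sided HDXs.
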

\begin{proof}
  The graph $B_{p^3}$ is a bipartite, $p^3$-regular graph with $\lambda(B_{p^3}) \leq \frac{1}{p^{3/2}}$ and $A(U,V,E)$ is a $p^2$-regular graph that is an induced subgraph of $B_{p^3}$. Hence, by \cref{lem:induced-subgraph-eig-bound},
  \[
    \lambda(A) \leq \frac{p^3 \cdot (1/p^{3/2})}{p^2} = \frac{1}{\sqrt{p}}. \qedhere
  \]
\end{proof}

\subsection*{The final expansion bounds}

From the corollary above, we obtain the following theorem of Kaufman and Oppenheim.

\begin{theorem}[\cite{KaufmanO2018}]\label{thm:KO-HDX} For $p > (d-2)^2$, 
  the $(d-1)$-dimensional coset complex $\X(G,\set{K_1,\ldots,K_d})$
  is a $\frac1{\sqrt{p}-(d-2)}$-one-sided-spectral HDX.
\end{theorem}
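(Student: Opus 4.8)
The plan is to assemble the bound directly from the Descent Theorem (\autoref{thm:trickle-down}), treating all the work of this section as the verification of that theorem's two hypotheses. The complex $\X(G,\set{K_1,\ldots,K_d})$ has dimension $d-1$, so I would apply the Descent Theorem with its ambient dimension set to $D = d-1$. The first thing to record is that this shifts the relevant ``top link'' level from $X(D-2)$ to $X(d-3)$, and correspondingly the factor appearing in the denominator of the Descent Theorem's conclusion becomes $(D-1) = (d-2)$ rather than $(d-1)$. Keeping this index shift straight is the only place one could slip, so I would flag it explicitly at the outset.

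For the connectivity hypothesis, I would simply invoke the corollary already obtained, which combined \cref{lem:KS-intersection-property}, \cref{lem:conn-criterion} and \cref{lem:link-structure} to show that the $1$-skeleton of every link is connected. For the spectral hypothesis, I would note that the links of faces $v \in X(d-3)$ are, up to the isomorphism reducing the translating element $g$ to $\mathrm{id}$, exactly the $1$-dimensional links analyzed earlier, and that these fall into two cases: when the two missing types are non-consecutive the link is a complete bipartite graph, whose normalized second eigenvalue is $0$; and when they are consecutive the link is the graph $A$, for which the last corollary gives $\lambda(A) \leq 1/\sqrt{p}$. Taking the worst case over all such $v$, every relevant link is a $\lambda$-onesided-spectral expander with $\lambda = 1/\sqrt{p}$.

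It then remains only to substitute into the Descent Theorem. With $\lambda = 1/\sqrt{p}$ and $D = d-1$, the conclusion reads $\pfrac{\lambda}{1-(d-2)\lambda}$, and a one-line simplification gives
\[
  \frac{1/\sqrt{p}}{1 - (d-2)/\sqrt{p}} = \frac{1}{\sqrt{p} - (d-2)}.
\]
The hypothesis $p > (d-2)^2$ is precisely what guarantees $\sqrt{p} > d-2$, equivalently $1 - (d-2)\lambda > 0$, so that the Descent Theorem produces a genuine (finite, positive) bound rather than a vacuous one. I do not anticipate any real obstacle here: every substantive step---connectivity, the reduction of the interesting links to the graph $A$, and the eigenvalue bound $\lambda(A) \leq 1/\sqrt{p}$---has already been carried out, so the theorem is the bookkeeping step that packages them through the Descent Theorem.
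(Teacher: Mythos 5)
Your proposal is correct and follows essentially the same route as the paper: the paper's proof is exactly the substitution of $\lambda = 1/\sqrt{p}$ into the Descent Theorem applied to the $(d-1)$-dimensional complex (so the denominator factor is $d-2$), with connectivity and the link analysis supplied by the preceding corollaries. Your explicit handling of the index shift, the complete-bipartite case (second eigenvalue $0$), and the role of $p > (d-2)^2$ is just a more careful write-up of the same bookkeeping.
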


\begin{proof}
  Follows directly from \autoref{thm:trickle-down} that
  $\X(G,\set{K_1,\ldots,K_d})$ is a $\gamma$-one-sided-spectral HDX for
  \[
    \gamma \leq \frac{\sfrac{1}{\sqrt{p}}}{1 - (d-2)(\sfrac{1}{\sqrt{p}})} = \frac{1}{\sqrt{p} - (d-2)}.\qedhere
  \]
\end{proof}

\paragraph{Constructing two-sided-spectral HDXs and standard expanders:}

The $(d-1)$-dimensional coset complex $\X(G,\set{K_1,\ldots,K_d})$ is
not a two-sided-spectral HDX as the 1-skeletons of the links of the faces in
$\X(d-3)$ are bipartite. However, if we restrict attention to the $k$-skeleton of
$\X$ for some $k<d-1$ then we can bound the least eigenvalue using the
descent theorem for least eigenvalue
(\autoref{thm:trickle-down-one-level}\eqref{thm:trickle-down-negative}). This
is summarized in the following corollary. 

\begin{corollary}\label{cor:KO-exp} For $p > (d-2)^2$ and any $1 \leq
  k < d$
  the $k$-skeleton of the $(d-1)$-dimensional coset complex $\X(G,\set{K_1,\ldots,K_d})$
  is a $\max\set{\frac1{\sqrt{p}-(d-2)}\;,\;\frac{1}{d-k}}$-two-sided-spectral HDX.
  \end{corollary}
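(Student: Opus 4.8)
Throughout, write $\lambda_2$ and $\lambda_{\min}$ for the second-largest and the least eigenvalues of the normalized adjacency operator of the $1$-skeleton of a link. By \autoref{def:hdx}, to show the $k$-skeleton is a two-sided HDX with the stated parameter it suffices to prove, for every $-1\le i\le k-2$ and every $s\in\X(i)$, that the $1$-skeleton of $(X_s,w_s)$ satisfies both $\lambda_2\le\frac{1}{\sqrt{p}-(d-2)}$ and $\lambda_{\min}\ge-\frac{1}{d-k}$. The plan is to treat these two bounds separately. For $\lambda_2$, the first thing I would observe is that the $1$-skeleton of the link of an $s\in\X(i)$ with $i\le k-2$ depends only on faces of size at most $i+3\le k+1$, and hence is identical in $\X$ and in its $k$-skeleton. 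Therefore the onesided bound already established for the full complex in \autoref{thm:KO-HDX} transfers with no change, giving $\lambda_2\le\frac{1}{\sqrt{p}-(d-2)}$ for every relevant link.

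For $\lambda_{\min}$ I would first make the partite structure explicit. By \cref{lem:link-structure}, the link of a face $s$ of type $S$ with $|S|=i+1$ is isomorphic to a coset complex $\X(K_S,\setdef{K_S\cap K_j}{j\notin S})$ of the same family, built from $m:=d-i-1$ subgroups; its $1$-skeleton is thus an $m$-partite graph. The benchmark here is the balanced complete $m$-partite graph, whose normalized adjacency has least eigenvalue exactly $-\frac{1}{m-1}$. As $i$ ranges over $\set{-1,\dots,k-2}$ the number of parts $m=d-i-1$ ranges over $\set{d-k+1,\dots,d}$, and the worst (most negative) benchmark $-\frac{1}{m-1}$ occurs at the top level $i=k-2$, where $m=d-k+1$ and $-\frac{1}{m-1}=-\frac{1}{d-k}$; this accounts for the second term of the $\max$.

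To turn the benchmark into a rigorous bound I would apply the least-eigenvalue descent theorem (\autoref{thm:trickle-down-one-level}\eqref{thm:trickle-down-negative}) to each such $m$-partite coset complex $Y_m=\X(K_S,\setdef{K_S\cap K_j}{j\notin S})$. Its top-level ($1$-dimensional) links are exactly the $1$-dimensional links analysed earlier, namely a complete bipartite graph or the lines-points graph $A$, both of which satisfy the base-case estimates $\lambda_{\min}\ge-1$ and $\lambda_2\le\frac{1}{\sqrt{p}}$. Starting from the bipartite base value $-1$ and descending through the $m-2$ levels of $Y_m$, feeding in at each step the onesided gap $\lambda_2\le\frac{1}{\sqrt{p}-(d-2)}$ (which is small because $p>(d-2)^2$), the descent improves the least-eigenvalue bound monotonically and closes up to $-\frac{1}{m-1}$. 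Hence $\lambda_{\min}\ge-\frac{1}{m-1}\ge-\frac{1}{d-k}$ for every link in range, and combining with the $\lambda_2$ estimate shows each relevant link is a $\max\set{\frac{1}{\sqrt{p}-(d-2)},\frac{1}{d-k}}$-two-sided expander, as claimed.

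The step I expect to be the main obstacle is this last one: confirming that the least-eigenvalue descent telescopes cleanly from the bipartite base value $-1$ all the way up to $-\frac{1}{m-1}$, i.e.\ that the cumulative slack introduced by the nonzero onesided gaps at the $m-2$ intermediate levels is controlled in the regime $p>(d-2)^2$ (either vanishing for the ideal complete-multipartite model, or being absorbed into the $\max$ with the first term). By contrast, the invariance of the link $1$-skeletons under truncation to the $k$-skeleton, the transfer of the onesided bound from \autoref{thm:KO-HDX}, and the identification of the $m$-partite structure via \cref{lem:link-structure} are all routine.
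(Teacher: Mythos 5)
Your proposal is correct and matches the paper's (implicit) argument: the corollary is stated there as a direct consequence of the least-eigenvalue descent (\autoref{thm:trickle-down-one-level}\eqref{thm:trickle-down-negative}) applied to each link, exactly as you do, with the onesided bound inherited from \autoref{thm:KO-HDX}. The only clarification is that the obstacle you flag at the end is not one --- the least-eigenvalue recursion $\eta \mapsto \eta/(1-\eta)$ in the descent theorem does not involve the onesided gap at all, so starting from $\eta=-1$ at the bipartite top-level links it telescopes exactly to $-1/(m-1)$ for an $m$-partite link, giving $\lambda_{\min}\ge -\frac{1}{d-i-2}\ge -\frac{1}{d-k}$ for every $s\in\X(i)$ with $i\le k-2$.
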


  In particular, if we set $k=1$ in the above corollary, we get a
  standard
  $\max\set{\frac1{\sqrt{p}-(d-2)}\;,\;\frac{1}{d-1}}$-two-sided-spectral expander.  This graph is a $d$-partite graph and hence its
  least eigenvalue is at most $\sfrac{-1}{(d-1)}$, while the above
  argument shows that it is least (and hence equal to)
  $\sfrac{-1}{(d-1)}$. 

 Thus, this not only yields an elementary construction and proof of
  one-sided-spectral HDXs (\cref{thm:KO-HDX}), but also one of
  standard spectral expander (\cref{cor:KO-exp}).

{\small
\bibliographystyle{prahladhurl}
\bibliography{HSa-bib}

\begin{thebibliography}{RVW05}

\bibitem[Dik19]{Dikstein2019}
\textsc{Yotam Dikstein}.
\newblock \href{https://simons.berkeley.edu/sites/default/files/docs/14119/
  ecctalkii.pdf} {\emph{{Oppenheim's Trickling Down Theorem}}}, 2019.
\newblock {Lecture notes from the Error-Correcting Codes and High-Dimensional
  Expansion Boot Camp at Simons Institute for the Theory of Computing}.

\bibitem[Gar73]{Garland1973}
\textsc{Howard Garland}.
\newblock \href{https://dx.doi.org/10.2307/1970829} {\emph{$p$-adic curvature
  and the cohomology of discrete subgroups of $p$-adic groups}}.
\newblock Ann.\ of Math., 97(3):375--423, 1973.

\bibitem[KO18]{KaufmanO2018}
\textsc{Tali Kaufman} and \textsc{Izhar Oppenheim}.
\newblock \href{https://dx.doi.org/10.1145/3188745.3188782} {\emph{Construction
  of new local spectral high dimensional expanders}}.
\newblock In \emph{Proc.\ $50$th ACM Symp.\ on Theory of Computing (STOC)},
  pages 773--786. 2018.
\newblock \href{http://arxiv.org/abs/1710.05304}{\path{arXiv:1710.05304}}.

\bibitem[LPS88]{LubotzkyPS1988}
\textsc{Alexander Lubotzky}, \textsc{Ralph Phillips}, and \textsc{Peter
  Sarnak}.
\newblock \href{https://dx.doi.org/10.1007/BF02126799} {\emph{Ramanujan
  graphs}}.
\newblock Combinatorica, 8(3):261--277, 1988.

\bibitem[LSV05a]{LubotzkySV2005-exphdx}
\textsc{Alexander Lubotzky}, \textsc{Beth Samuels}, and \textsc{Uzi Vishne}.
\newblock \href{https://dx.doi.org/10.1016/j.ejc.2004.06.007} {\emph{Explicit
  constructions of {R}amanujan complexes of type $\tilde{A_d}$}}.
\newblock European J. Combin., 26(6):965--993, 2005.
\newblock \href{http://arxiv.org/abs/math/0406217}{\path{arXiv:math/0406217}}.

\bibitem[LSV05b]{LubotzkySV2005-hdx}
---{}---{}---.
\newblock \href{https://dx.doi.org/10.1007/BF02772543} {\emph{Ramanujan
  complexes of type $\tilde{A_d}$}}.
\newblock Israel J.\ Math., 149(1):267--299, 2005.

\bibitem[OP22]{ODonnellP2022}
\textsc{Ryan O'Donnell} and \textsc{Kevin Pratt}.
\newblock \emph{High-dimensional expanders from {C}hevalley groups}.
\newblock In \textsc{Shachar Lovett}, ed., \emph{Proc.\ $37$th Comput.\
  Complexity Conf.}, volume 234 of \emph{LIPIcs}. Schloss Dagstuhl, 2022.
\newblock \href{http://arxiv.org/abs/2203.03705}{\path{arXiv:2203.03705}}.

\bibitem[Opp18]{Oppenheim2018}
\textsc{Izhar Oppenheim}.
\newblock \href{https://dx.doi.org/10.1007/s00454-017-9948-x} {\emph{Local
  spectral expansion approach to high dimensional expanders part {I}: Descent
  of spectral gaps}}.
\newblock Discrete Comput.\ Geom., 59(2):293--330, 2018.
\newblock \href{http://arxiv.org/abs/1709.04431}{\path{arXiv:1709.04431}}.

\bibitem[RVW05]{ReingoldVW2005}
\textsc{Omer Reingold}, \textsc{Salil Vadhan}, and \textsc{Avi Wigderson}.
\newblock \href{https://dx.doi.org/10.2307/3062153} {\emph{Entropy waves, the
  zig-zag graph product, and new constant-degree expanders}}.
\newblock Annals of Math (2), 155(1):157--187, 2005.
\newblock (Preliminary version in {\em 41st FOCS}, 2000).
\newblock
  \href{http://arxiv.org/abs/math.CO/0406038}{\path{arXiv:math.CO/0406038}},
  \href{https://eccc.weizmann.ac.il/eccc-reports/2001/TR01-018}{\path{eccc:2001/TR01-018}}.

\end{thebibliography}
}

\appendix

\section{Proof of the Descent Theorem}\label{sec:descent}

For the sake of completeness, we present the proof of
\autoref{thm:trickle-down} that asserts that proving spectral expansion
for the maximal faces is sufficient to obtain expansion of any
link. This exposition is essentially from the  lecture notes by Dikstein~\cite{Dikstein2019}. 

Let $(X,w)$ be a weighted $d$-dimensional simplicial complex. Let
$\mu_{d}=w|_{X(d)}$ be the distribution on the set $X(d)$ of
$(d+1)$-sized faces. This distribution induces distributions $\mu_i$ on
$X(i)$ in the natural way.\\ 

For two functions $f,g\colon X(0) \rightarrow \R$, define their \emph{inner product}
$\inangle{f,g}_X = \E_{u \sim \mu_0} [f(u) g(u)]$. We will drop the
subscript $X$ if it is clear from context. Note that, by the
definition of $\mu_1$, sampling $u$ according to $\mu_0$  can be
equivalently achieved  by sampling an edge $(u,v)$ according to $\mu_1$ and returning one of the points uniformly at random. Therefore,
\begin{equation}\label{eqn-innerprod1}
  \inangle{f,g}_X = \E_{u \sim \mu_0}[f(u) g(u)] = \E_{\set{u,v} \sim
    \mu_1}[f(u)g(u)] =\E_{v \sim \mu_0} \E_{u \sim X_v(0)}[f(u) g(u)] =\E_{v \sim \mu_0}[\inangle{f_v,g_v}_{X_v}],
\end{equation}
where $f_v,g_v\colon X_v(0) \rightarrow \R$ are the restrictions to the link of $v$. 

Define the \emph{adjacency operator} $A$ that, on a function $f\colon X(0) \rightarrow \R$ on vertices returns another function $Af$ on vertices defined via
\[
  Af(v) = \E_{u\sim v}[f(u)],
\]
where $u\sim v$ refers to a random neighbour of $v$ according to the distribution $u \sim \mu_0(X_v)$.
In other words, $A$ \emph{averages} $f$ over neighbours. Furthermore,
$A$ is self-adjoint with respect to the above inner product, i.e,
$\inangle{Af,g} = \inangle{f,Ag}$. Hence, it has $n$ real eigenvalues
and an orthonormal set of eigenvectors. Clearly
$A\mathbb{1} = \mathbb{1}$; the constant $1$ function is an
eigenvector for this operator (in fact, it is an eigenvector
corresponding to the largest eigenvalue 1). The remaining eigenvalues
are characterized by the \autoref{thm:cf}.

\begin{cftheorem}\label{thm:cf}
  Let $A\in \R^{n \times n}$ be an $n \times n$ matrix over the reals
  that is self-adjoint with respect to some inner product
  $\inangle{\cdot , \cdot }\colon \R^n \times \R^n \to \R$. Then $A$ has $n$ real
  eigenvalues $\lambda_1 \geq \cdots \geq \lambda_n$ which have the
  following characterization.
  \[
    \lambda_i = \max_{V \colon \dim V = i}\; \min_{0\neq  x \in V}
    \frac{\inangle{x,Ax}}{\inangle{x,x}} =  \min_{V \colon \dim V = n-i+1}\; \max_{ 0\neq x \in V}
    \frac{\inangle{x,Ax}}{\inangle{x,x}}.\qedhere
  \]
\end{cftheorem}

Similar to \autoref{eqn-innerprod1}, we have
\begin{align}
  \inangle{Af,g}_X & = \E_{\set{u,w} \sim \mu_1} [f(u) g(w)] = \E_{\set{u,v,w} \sim \mu_2} [f(u) g(w)]\nonumber\\
                 & = \E_{v \sim \mu_0}\insquare{\E_{\set{u,w}\sim  \mu_1(X_v)} [f(u) g(w)]}\nonumber\\
                 & = \E_{v \sim \mu_0}\insquare{\inangle{A_v f_v,g_v}_{X_v}}\label{eqn-innerprod2}
\end{align}
where $A_v$ denotes the adjacency operator restricted to the link $X_v$. \\

With the above notation, we can now state the theorem we wish to prove.
It suffices to prove the theorem in the case of $d = 3$ as we can obtain \autoref{thm:trickle-down} by induction.

\begin{theorem}\label{thm:trickle-down-one-level}
  Suppose $(X,w)$ is weighted $2$-dimensional simplicial complex. Then, we have the following two implications:
  \begin{enumerate}
  \item \label{thm:trickle-down-positive} Suppose the 1-skeleton of
    $X$ is connected and for every vertex $v \in X(0)$,
    $\inangle{A_v f,f} \leq \lambda \inangle{f,f}$ for all $f\colon
    X_v(0) \rightarrow \R$ with $f \perp \mathbb{1}_{X_v}$ for some
    $\lambda \in [0,1)$. Then, for
    any $g\colon X(0) \rightarrow \R$ with $g \perp \mathbb{1}_X$, we
have $\inangle{Ag,g} \leq \gamma \inangle{g,g}$ where $\gamma \leq
\frac{\lambda}{1- \lambda}$.
  \item \label{thm:trickle-down-negative} Suppose the 1-skeleton of
    $X$ is non-empty and for every vertex $v \in X(0)$, we have
    $\inangle{A_v f,f} \geq \eta \inangle{f,f}$ for all $f\colon
    X_v(0) \rightarrow \R$ for some $\eta \in [-1,1)$. Then, for any $g\colon X(0) \rightarrow \R$, we
have $\inangle{Ag,g} \geq \gamma \inangle{g,g}$ where $\gamma \geq
\frac{\eta}{1- \eta}$.
\end{enumerate}

\end{theorem}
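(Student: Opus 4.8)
The plan is to run Garland's local-to-global argument: use the two localization identities \eqref{eqn-innerprod1} and \eqref{eqn-innerprod2} to rewrite the global quadratic form $\inangle{Ag,g}$ as an average over vertices of the corresponding \emph{local} quadratic forms on the links, and then feed in the hypothesized spectral bounds on each link.

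First I would fix an arbitrary $g\colon X(0)\to\R$ and, for each vertex $v$, split its restriction $g_v$ to the link into its average and a mean-zero part. Writing $\bar g_v := \E_{u\in X_v(0)}[g_v(u)]$, I observe that $\bar g_v = Ag(v)$ directly from the definition of the adjacency operator, and I set $g_v^\perp := g_v - \bar g_v\cdot\mathbb{1}_{X_v}$, so that $g_v^\perp \perp \mathbb{1}_{X_v}$ in the link inner product. Using $A_v\mathbb{1}_{X_v}=\mathbb{1}_{X_v}$, the self-adjointness of $A_v$, and $\inangle{\mathbb{1}_{X_v},\mathbb{1}_{X_v}}_{X_v}=1$, the cross terms cancel and I obtain the pointwise identities
\[
  \inangle{A_v g_v, g_v}_{X_v} = \bar g_v^2 + \inangle{A_v g_v^\perp, g_v^\perp}_{X_v}, \qquad \inangle{g_v, g_v}_{X_v} = \bar g_v^2 + \inangle{g_v^\perp, g_v^\perp}_{X_v}.
\]
Taking expectation over $v\sim\mu_0$, invoking \eqref{eqn-innerprod1} and \eqref{eqn-innerprod2}, and using $\E_v[\bar g_v^2]=\E_v[(Ag(v))^2]=\inangle{Ag,Ag}$, this gives the global identities
\[
  \inangle{Ag,g} = \inangle{Ag,Ag} + \E_v\insquare{\inangle{A_v g_v^\perp, g_v^\perp}_{X_v}}, \qquad \inangle{g,g} = \inangle{Ag,Ag} + \E_v\insquare{\inangle{g_v^\perp, g_v^\perp}_{X_v}}.
\]

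Next I would insert the local hypotheses. For part \eqref{thm:trickle-down-positive}, since each $g_v^\perp\perp\mathbb{1}_{X_v}$, the assumption gives $\inangle{A_v g_v^\perp, g_v^\perp}\le\lambda\inangle{g_v^\perp,g_v^\perp}$; averaging (the weights $\mu_0(v)$ being nonnegative) and substituting the second identity yields $\inangle{Ag,g}\le(1-\lambda)\inangle{Ag,Ag}+\lambda\inangle{g,g}$. For part \eqref{thm:trickle-down-negative} the hypothesis holds for \emph{all} $f$, so applying it to $g_v^\perp$ gives the reversed inequality $\inangle{Ag,g}\ge(1-\eta)\inangle{Ag,Ag}+\eta\inangle{g,g}$. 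To conclude, I would take $g$ to be the relevant extremal eigenvector of the self-adjoint operator $A$ (available via \autoref{thm:cf}): in part \eqref{thm:trickle-down-positive}, the eigenvector $g\perp\mathbb{1}$ attaining $\gamma=\lambda_2(A)$; in part \eqref{thm:trickle-down-negative}, the one attaining $\gamma=\lambda_{\min}(A)$. Substituting $\inangle{Ag,Ag}=\gamma^2\inangle{g,g}$ and cancelling $\inangle{g,g}>0$ reduces both cases to the scalar inequality $(1-\lambda)\gamma^2-\gamma+\lambda\ge0$ (resp.\ $\le0$ with $\eta$), whose roots are exactly $\gamma=1$ and $\gamma=\frac{\lambda}{1-\lambda}$, so the quadratic factors as $(1-\lambda)(\gamma-1)\inparen{\gamma-\frac{\lambda}{1-\lambda}}$.

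The one genuinely load-bearing input is the sign of the factor $(\gamma-1)$, which is precisely where the topological hypotheses enter. In part \eqref{thm:trickle-down-positive}, connectivity of the $1$-skeleton forces the eigenvalue $1$ to be simple, so $\gamma=\lambda_2<1$; combined with $1-\lambda>0$, the sign analysis pins down $\gamma\le\frac{\lambda}{1-\lambda}$. In part \eqref{thm:trickle-down-negative}, non-emptiness of the $1$-skeleton guarantees $A\ne I$, so the least eigenvalue also satisfies $\gamma<1$, again making $(\gamma-1)<0$ and forcing $\gamma\ge\frac{\eta}{1-\eta}$. I expect the main difficulty to be purely one of careful bookkeeping: verifying that $\bar g_v=Ag(v)$ and that the orthogonality cancellations hold exactly, so that the two global identities above are equalities; once these are secured, everything downstream is elementary algebra.
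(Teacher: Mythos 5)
Your proposal is correct and follows essentially the same route as the paper's proof: decompose each localized function $g_v$ into its mean $\alpha_v = Ag(v)$ times $\mathbb{1}_{X_v}$ plus an orthogonal part, use the localization identities to get $\gamma \le (1-\lambda)\gamma^2 + \lambda$ (resp.\ $\ge$ with $\eta$) for the extremal eigenvector, and close with the sign analysis using $\gamma < 1$. The only cosmetic difference is that you establish the two global identities for arbitrary $g$ before specializing to the eigenvector, whereas the paper fixes the eigenvector at the outset.
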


\noindent Before we see a proof of this, let us  see how \autoref{thm:trickle-down} follows from this.

\begin{dtheorem}[\autoref{thm:trickle-down} restated]\label{thm:td-full}
  Suppose $(X,w)$ is a non-empty $d$-dimensional weighted simplicial complex with the following properties.
  \begin{itemize}\itemsep0pt
  \item The $1$-skeleton of every link is connected.
  \item For all $v \in X(d-2)$, the link $(X_v,w_v)$ is a
    $\lambda$-one-sided-spectral expander for some $\lambda < \frac1{d-1}$. I.e., there is a $\lambda > 0$ such that, for every $v \in X(d-2)$ and every $g\colon X_v(0) \rightarrow \R$ with $g \perp \mathbb{1}$, we have
  \[
\inangle{A_v g,g} \leq \lambda \inangle{g,g}.
  \]

  \end{itemize}
Then, $(X,w)$ is a $\gamma$-one-sided-spectral HDX for $\gamma \leq \frac{\lambda}{1 - (d-1)\lambda}$. That is, for any $v \in X(-1) \union \cdots \union X(d-2)$ and every $g\colon X_v(0) \rightarrow \R$ with $g \perp \mathbb{1}$, we have $\inangle{A_v g,g} \leq \gamma \inangle{g,g}$.\\

Furthermore, suppose we also know that there is a $\eta \in [-1,0)$ such that, for every $v \in X(d-2)$ and every $g\colon X_v(0) \rightarrow \R$, we have $\inangle{A_v g,g} \geq \eta \inangle{g,g}$. Then, $X$ is a $\gamma$-two-sided-spectral HDX with 
\[
  \gamma \leq \max\inparen{\frac{\lambda}{1 - (d-1)\lambda}, \abs{\frac{\eta}{1 - (d-1)\eta}}}.
\]
That is, for every $g \colon X_v(0) \rightarrow \mathbb{R}$ with $g
\perp \mathbb{1}$, we have $|\inangle{A_v g,g} | \leq \gamma \inangle{g,g}$.
\end{dtheorem}
\begin{proof}
  For any $i \leq d-2$, let
  \[\lambda_i = \min_{v\in X(i)} \max_{\substack{g:X_v(0) \rightarrow \R\\g\perp \mathbb{1}}}\frac{\inangle{A_v g,g}}{\inangle{g,g}},
  \]
  the smallest one-sided-spectral expansion with respect to $X(i)$. From repeated applications of  \autoref{thm:trickle-down-one-level},
  \[
    \lambda_{-1} \leq \frac{\lambda_0}{1 - \lambda_0} \leq \frac{\lambda_{1}/(1-\lambda_{1})}{1 - (\lambda_{1}/(1-\lambda_{1}))} = \frac{\lambda_{1}}{1 - 2\lambda_{1}} \leq \cdots \leq \frac{\lambda_{d-2}}{1 - (d-1)\lambda_{d-2}}
  \]
  which eventually completes the proof for one-sided-spectral expansion.

  For two-sided-spectral expansion, we also have to show that all the eigenvalues are bounded away from $-1$. One again, let $\eta_i$ be such that
  \[
    \eta_i = \max_{v\in X(i)} \min_{\substack{g\colon X_v(0) \rightarrow \R\\g\perp \mathbb{1}}}\frac{\inangle{A_v g,g}}{\inangle{g,g}}.    
  \]
  By repeated applications of \autoref{thm:trickle-down-one-level}~\eqref{thm:trickle-down-negative}, we obtain
  \[
    \eta_{-1} \geq \frac{\eta_0}{1 - \eta_0} \geq \frac{\eta_{1}/(1-\eta_{1})}{1 - (\eta_{1}/(1-\eta_{1}))} = \frac{\eta_{1}}{1 - 2\eta_{1}} \geq \cdots \geq \frac{\eta_{d-2}}{1 - (d-1)\eta_{d-2}}
  \]
  Together, we have that $X$ is a $\gamma$-two-sided-spectral HDX for
  \[
  \gamma = \max\inparen{\frac{\lambda}{1 - (d-1)\lambda}, \abs{\frac{\eta}{1 - (d-1)\eta}}}.\qedhere
  \]
  \end{proof}

\begin{proof}[Proof of \autoref{thm:trickle-down-one-level}]
  Let $g$ be an eigenvector that satisfies $\inangle{g,g} = 1$ and
  $g\perp \mathbb{1}_X$ that maximises (or minimises) $\inangle{Ag,g}$, and $\gamma =
  \inangle{Ag,g}$ be the extremal value. In particular, $Ag =
  \gamma\cdot g$. From \eqref{eqn-innerprod2} we have
    $\gamma = \inangle{Ag,g} = \E_v \insquare{\inangle{A_v g_v,g_v}}$.

    Even though $g \perp \mathbb{1}_X$, the \emph{local} component $g_v$ need not be perpendicular to $\mathbb{1}_{X_v}$. Hence,  let us write $g_v = \alpha_v \mathbb{1}_{X_v} + g_v^{\perp}$ where $g_v^{\perp} \perp \mathbb{1}_{X_v}$; we shall drop the subscript from $\mathbb{1}_{X_v}$ for the sake of brevity as the length of the vector will be clear from context. Note that $\alpha_v = \inangle{g_v, \mathbb{1}} = \E_{w\in X_v(0)} [g_v] = Ag(v)$. Therefore, $\E_v [\alpha_v^2] = \inangle{Ag,Ag} = \gamma^2$. Hence,
  \begin{align}
    \gamma = \inangle{Ag,g} & = \E_v \insquare{\inangle{A_v g_v,g_v}} = \E_v \insquare{\alpha_v^2 + \inangle{A_v g_v^\perp, g_v^{\perp}}}\label{eqn:trickle-LHS}
  \end{align}
  We shall now focus on the proof of \autoref{thm:trickle-down-one-level}~\eqref{thm:trickle-down-positive}. The other direction is exactly identical with the inequality flipped.

  In the case of \autoref{thm:trickle-down-one-level}~\eqref{thm:trickle-down-positive}, where we are given $\inangle{A_v g_v^\perp, g_v^{\perp}} \leq \lambda \inangle{g_v^\perp, g_v^{\perp}}$ for all $v\in X(0)$, we have
  \begin{align*}
    \gamma &= \E_v \insquare{\alpha_v^2 + \inangle{A_v g_v^\perp, g_v^{\perp}}} \leq \E_v \insquare{\alpha_v^2 + \lambda \inangle{g_v^\perp, g_v^{\perp}}} \\
    & = \E_v \insquare{(1-\lambda) \alpha_v^2 + \lambda \inangle{g_v, g_v}}\\
    &= (1-\lambda) \gamma^2 + \lambda.\\
    \implies \gamma(1-\gamma) & \leq \lambda (1- \gamma^2)\\
    \implies \gamma & \leq \lambda (1+\gamma) &\text{(connected, thus $\gamma < 1$)}\\
    \implies \gamma & \leq \frac{\lambda}{1 - \lambda}.
  \end{align*}

  In the case of \autoref{thm:trickle-down-one-level}~\eqref{thm:trickle-down-negative}, where we are given $\inangle{A_v g_v^\perp, g_v^{\perp}} \geq \eta \inangle{g_v^\perp, g_v^{\perp}}$ for all $v\in X(0)$, the same argument yields
  \begin{align*}
    \gamma = \E_v \insquare{\alpha_v^2 + \inangle{A_v g_v^\perp, g_v^{\perp}}}&\geq \E_v \insquare{\alpha_v^2 + \eta \inangle{g_v^\perp, g_v^{\perp}}} = (1 - \eta)\gamma^2 + \eta\\
    \implies \gamma & \geq \frac{\eta}{1 - \eta}\qedhere
  \end{align*}
\end{proof}

\section{Primer on Group Theory}\label{sec:group}

In this section, for completeness,  we shall note the basic definitions and properties of groups that are used in this exposition. 

\begin{definition}[Groups and subgroups]
  A set of elements $G$ equipped with a binary operation $\star: G \times G \rightarrow G$ is said to be a \emph{group} if it satisfies the following properties:
  \begin{description}
  \item[Associativity:] For all $g_1,g_2,g_3 \in G$, we have $\inparen{g_1 \star g_2} \star g_3 = g_1 \star \inparen{g_2 \star g_3}$.
  \item[Identity:] There exists an \emph{identity} element $\mathrm{id} \in G$ such that, for all $g\in G$, we have $g \star \mathrm{id} = \mathrm{id} \star g = g$.
  \item[Inverses:] For every element $g \in G$, there is an element $g^{-1} \in G$ such that $g \star g^{-1} = g^{-1} \star g = \mathrm{id}$. 
  \end{description}

  A subset $H \subseteq G$ is said to be a \emph{subgroup} of $G$ if
  $H$ the binary operation $\star$ restricted to $H$ satisfies  the
  above three properties (including the fact that $h_1 \star h_2 \in
  H$ for all $h_1,h_2 \in H$).

   \medskip

  Often the binary operation $\star$ is omitted and products just expressed as concatenation of elements.

\end{definition}

\begin{definition}[Cosets]
Given a subgroup $H$ of a group $G$, if $x \in G$ is an arbitrary element, the \emph{(left-)coset} of $H$ containing $x$, denoted by $x H$, is defined as the set
\[
x H = \setdef{x h}{h\in H}.
\]
Two cosets $x H$ and $y H$ are identical if and only if $x^{-1} y \in H$. Hence, any element $x'\in x H$ is also referred to as a \emph{coset representative} of $x H$ as $x' H = x H$. 
\end{definition}

Right-cosets are defined similarly. A subgroup $H$ is said to be
\emph{normal} if the right-coset and left-cosets agree for all $x$,
i.e., $xH = Hx, \forall x \in G$. 

Since two cosets of a subgroup $H$ of $G$ are either identical or
disjoint, the set of distinct cosets of a subgroup $H$ of $G$
partition the elements of $G$. If a subgroup $H$ is normal, this set
of cosets forms a group $G/H$, called the \emph{quotient group} of $H$
in $G$.\\

Suppose $H, K$ are subgroups of $G$, we will often consider the product $HK$ (or $H \star K$) which refers to the set $\setdef{h k}{h \in H\;,\; k\in K}$. 
It is worth stressing that $HK$ \emph{need not} be a subgroup of $G$ and the above just refers to a set of elements that can be expressed as an (ordered) product of an element in $H$ and an element in $K$. 

For an arbitrary set $S$ of $G$, we will define $\inangle{S}$ as the smallest subgroup of $G$ that contains the set $S$. This is also referred to as the \emph{group generated} by $S$.\\

In general, the binary operation $\star$ is order dependent. Groups where $g_1 g_2 = g_2 g_1$ for all $g_1,g_2 \in G$ are said to be \emph{commutative} or \emph{Abelian} groups. The following notion of \emph{commutators} (and \emph{commutator subgroups}) is a way to measure \emph{how non-commutative} a group $G$ is.

\begin{definition}[Commutators]\label{defn:commutator}
  For a pair of elements $g, h\in G$, we shall define the \emph{commutator} of $g,h$ (denoted by $[g,h]$) as
  \[
    [g,h] := g^{-1} h^{-1} g h. 
  \]
  The \emph{commutator subgroup} of $G$, denoted by $[G,G]$ is the \emph{group generated by all commutators}. That is,
  \[
    [G,G] := \inangle{\setdef{[g,h]}{g,h\in G}}.\qedhere
  \]
\end{definition}

Note that if $G$ is Abelian, then $[G,G] = \set{\mathrm{id}}$. As mentioned earlier, the commutator subgroup can be thought of as a way of describing how non-Abelian a group is. In fact, the commutator subgroup of $G$ is the smallest \emph{normal} subgroup $H$ of $G$ such that the \emph{quotient} $G/H$ is Abelian (although these are concepts that are not necessary to follow this exposition).

\end{document}
